\newcommand{\vel}{v}
\newtheorem{theorem}{Theorem}
\newtheorem{lemma}{Lemma}
\newtheorem{conjecture}{Conjecture}
\newenvironment{proof}{\textit{Proof.} }{$\blacksquare$}
\begin{document}

\preprint{AIP/123-QED}

\title{Deterministic Brownian motion generated from differential delay equations}

\author{Jinzhi Lei}
\affiliation{Zhou Pei-Yuan Center for Applied Mathematics, Tsinghua University, Beijing 100084, China}

\author{Michael C. Mackey}
\affiliation{Departments of Physiology, Physics, and Mathematics, and Centre for Applied Mathematics in Bioscience and Medicine (CAMBAM), McGill University, 3655 Promenade Sir William Osler,
Montr\'{e}al, QC, Canada H3G 1Y6}
\date{\today}                                           

\begin{abstract}
This paper addresses the question of how Brownian-like motion can arise from the solution of a deterministic differential delay equation.  To study this we analytically study the bifurcation properties of an apparently simple differential delay equation and then numerically investigate the probabilistic properties of chaotic solutions of the same equation. Our results show that solutions of the deterministic equation with randomly selected initial conditions  display a Gaussian-like density for long time, but the densities are supported on an interval of finite measure.   Using these chaotic solutions as velocities, we are able to produce Brownian-like motions, which show statistical properties akin to those of a classical Brownian motion over both short and long time scales.  Several conjectures  are formulated for the probabilistic properties of the solution of the differential delay equation. Numerical studies suggest that these conjectures could be ``universal'' for similar types of ``chaotic'' dynamics,  but we have been unable to prove this.
\end{abstract}

\pacs{05.40.Ca,05.40.Jc,05.45.Ac}

\keywords{Brownian motion, central limit theorem, differential delay equation}

\maketitle

\section{Introduction}

In 1828, Robert Brown reported his observations of the apparently erratic and unpredictable movement of small particles suspended in water, a phenomena now known as ``Brownian motion''. Almost three-quarters of a century later, a theoretical (and essentially molecular) explanation of this macroscopic motion was given by Einstein, in which Brownian motion is attributed to the summated effect of a vary large number of tiny impulsive forces delivered to the macroscopic particle being observed \cite{Einstein:05} (A nice English translation of this, and other works of Einstein on Brownian motion can be found in F\"{u}rth \cite{Furth:56}). Brownian motion has played a central role in the modeling of many random behaviors in nature and in stochastic analysis, and formed the basis for the development of an enormous branch of mathematics centered around the theory of Wiener processes.

Since Brownian motion is typically explained as the summated effect of many tiny random impulsive forces, it is of interest to know if and when Brownian motion can be produced from a deterministic process (also termed as \textit{deterministic Brownian motion}) \textit{without} introducing the assumptions typically associated with the theory of random processes. Studies starting from this premise have been published in the past several decades, and there are numerous investigations that have documented the existence of Brownian-like motion from deterministic dynamics, both in discrete time maps and flows \cite{Beck91,Klages02,Trefan,Gaspard,Lasota08,Romero02,Romero05}. These models have included the motion of a particle subjected to a deterministic but chaotic force (also known as \textit{microscopic chaos}) \cite{Beck91,Trefan}, or a many-degree-of-freedom Hamiltonian \cite{Romero02,Romero05}. Experimental evidence for deterministic microscopic chaos was reported in \cite{Gaspard} by the observation of Brownian motion of a colloidal particle suspended in water (cf. \cite{Briggs01} for a more tempered interpretation, and \cite[Chapter 18]{Mazo} for other possible interpretations of experiments like these).

Several investigators have shown that a Brownian-like motion can arise when a   particle is subjected to impulsive kicks, whose dynamics are modeled by the following equations \cite{Beck91,Chew:02, Mackey06}
\begin{equation}
\label{eq:dc}
\left\{
\begin{array}{rcl}
 \dfrac{d x}{d t} &=& \vel\\
m \dfrac{d \vel}{d t} &=& - \gamma \vel + f(t).
\end{array}\right.
\end{equation}
In equation \eqref{eq:dc}, $f$ is taken to be a fluctuating ``force'' consisting of a sequence of delta-function like impulses given by, for example,
\begin{equation}
\label{eq:dc-1}
f(t) = m \kappa \sum_{n=0}^\infty \xi(t) \delta(t - n\tau),
\end{equation}
and $\xi$ is a ``highly chaotic'' deterministic variable generated by $\xi(t+\tau) = T(\xi(t))$, where $T$ is an {\it exact} map or semi-dynamical system, e.g. the tent map on $[-1, 1]$ (for more discussions and terminologies see \cite{Lasota08,Mackey06} and references therein).
In the equations \eqref{eq:dc}-\eqref{eq:dc-1}, the impulsive forces are described by $\xi(t)\delta(t-n\tau)$, which are assumed to be instantaneously effective and independent of the velocity $\vel(t)$. Dynamical systems of the form \eqref{eq:dc} have received extensive attention, and are known to be able to generate a Gaussian diffusion process \cite{Beck91,Chew:02,Mackey06,Shimizu90,Shimizu93}.

In this study, we sought an alternative continuous time description  of the ``random force'' $f(t)$, which was assumed to depend on the state (velocity) of a particle, but with a lag time $\tau$, i.e.,
\begin{equation}
f(t) = F(v(t-\tau)),
\end{equation}
and where $F$ has the appropriate properties to generate chaotic solutions.  Thus, we consider the  following differential delay equation
\begin{equation}
\label{eq:dde-0}
\begin{array}{l}
\left\{\begin{array}{rcl}
\dfrac{d x}{dt}&=& \vel\\
m\dfrac{d \vel}{dt} &=& - \gamma\vel + F(\vel(t-\tau)),
\end{array}\right.\\
\\
\quad v(t)=\phi(t), \,\,-\tau \leq t \leq 0,
\end{array}
\end{equation}
where $\phi(t)$ denotes the initial function which must always be specified for a differential delay equation.  The second equation in \eqref{eq:dde-0} is known to have chaotic solutions for some forms of the nonlinear function $F$, for example see \cite{Heiden82,Berre,Dorizzi,Ikeda87,Mackey77}. In these cases, the force $F(v(t-\tau))$ is certainly deterministic but is also  unpredictable (in practice, but not in principle)  given knowledge of the initial function. In this paper,  \textit{we will examine how a Brownian motion can be produced by the differential delay equation \eqref{eq:dde-0}}. In particular, \textit{we investigate the statistical properties of the velocity $\vel(t)$, and $\langle [\Delta x(t)]^2\rangle$, the  mean square displacement (MSD), of the solutions defined by \eqref{eq:dde-0}}. We note that unlike the equation \eqref{eq:dc}, which is linear and non-autonomous, the equation \eqref{eq:dde-0} is a nonlinear autonomous system. Numerical simulations have shown that the second equation in \eqref{eq:dde-0} can generate processes with a Gaussian-like distribution \cite{Dorizzi,Heiden82}. Nevertheless, to the best of our knowledge, there is no analytic proof for the existence of Brownian motion based on the differential delay equation \eqref{eq:dde-0}.

We first make some observations about the second equation in \eqref{eq:dde-0} which determines the dynamics of the velocity. A simple form of the ``random'' force is binary and fluctuates between $\pm f_0$, for instance, given by
\begin{equation}
F(v) = 2 f_0\left[H(\sin (2\pi \beta v)) - \dfrac{1}{2}\right],
\end{equation}
where $H$ is the Heavyside step function, i.e.,
\begin{equation}
H(\vel) = \left\{\begin{array}{ll}
0\qquad & \mathrm{for}\ \vel<0\\
1 & \mathrm{for}\ \vel\geq 0.
\end{array}\right.
\end{equation}
Then we have following equation
\begin{equation}
\label{eq:ex-2}
\begin{array}{l}
\dfrac{d\vel}{dt} = -\gamma\vel + 2  \left\{H(\sin(2 \pi \beta \vel(t-1))) - \frac{1}{2}\right\}.\\
v(t)=\phi(t), \,\,-\tau \leq t \leq 0.
\end{array}
\end{equation}
Here and later we always assume the mass $m=1$ and $f_0 = 1$ which can be achieved through the appropriate scaling.
The delay differential equation \eqref{eq:ex-2} with a binary ``random force'' can be solved iteratively by the method of steps \footnote{A solution of \eqref{eq:ex-2} is associated with a time sequence $t_0 < t_1 < \cdots< t_n< \cdots$,
which is defined such that $\sin(2\pi\beta \vel(t)) \geq 0$ when $t\in [t_{2k}, t_{2k+1})$, and
$\sin(2\pi\beta \vel(t)) < 0 $ when $t\in [t_{2k-1}, t_{2k})$. Furthermore, if the sequence $(t_0, \cdots, t_n)$ is known, then the solution $\vel(t)$ when $t\in (t_n, t_n + 1)$ can be obtained explicitly, and therefore, $t_{n+1}$, which is defined as $\sin(2\beta \vel(t_{n+1})) = 0$,  is determined by $(t_0, \cdots, t_n)$. Once we obtain the entire sequence $\{t_n\}$, the solution of \eqref{eq:ex-2} consists of exponentially increasing or decreasing segments on each interval $[t_n, t_{n+1}]$. Nevertheless, the nature and properties of the  map $t_{n+1} = F_n(t_0, t_1,\cdots, t_{n})$ is still not characterized and has defied analysis to date.
}. Despite its simplicity, it can display behaviors similar to a random process. An example solution of \eqref{eq:ex-2} is shown in Figure \ref{fig:ex}, which looks like noisy.

\begin{figure}[htbp]
\centering
\includegraphics[width=8cm]{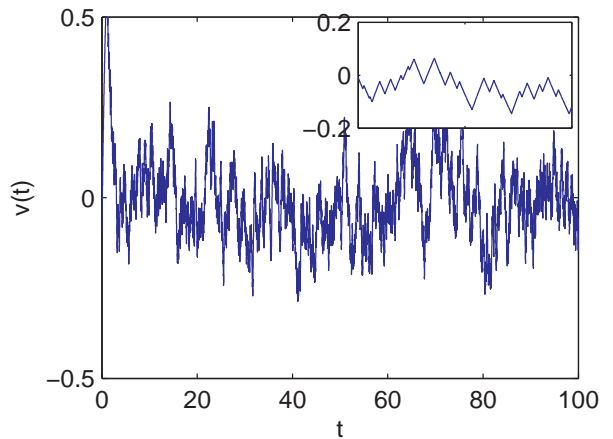}
\caption{(Color online) A sample solution of \eqref{eq:ex-2} with $\beta=10, \gamma = 1$, and an initial function $\phi(t) \equiv -0.1, t \in [-1,0]$. The rectangular inset shows the solution segment for $98 \leq t \leq 100$.}
\label{fig:ex}
\end{figure}
The ``random force'' in \eqref{eq:ex-2} is discontinuous and gives a continuous zigzag velocity curve (c.f. inset in Figure \ref{fig:ex}). In this paper, we will instead study an analogous different differential delay equation
\begin{equation}
\label{eq:1}
\begin{array}{l}
\dfrac{d \vel}{dt} = - \gamma\vel + \sin(2\pi\beta \vel(t-1)), \\
v(t)=\phi(t), \,\,-1 \leq t \leq 0.
\end{array}
\end{equation}
In \eqref{eq:1}, the parameter $\beta$ measures the frequency of the nonlinear function, and will turn out to be an essential parameter in the present study.  Note that one can re-scale and translate the variables such that equation \eqref{eq:1} can be rewritten as
\begin{equation}
\label{eq:2}
\begin{array}{l}
\dfrac{d \vel}{dt}=  - \vel + \mu \sin (\vel(t-\tau) - x_0), \\
v(t)=\phi(t), \,\,-\tau \leq t \leq 0.
\end{array}
\end{equation}
Equation \eqref{eq:2} (also known as Ikeda equation) was proposed by Ikeda   \textit{et. al.} to model a passive optical bistable resonator system, and shows chaotic behaviors  at
particular parameters such as $\mu = 20, x_0 = \pi/4$ and $\tau = 5$ \cite{Ikeda80,Ikeda87}.

In this paper,  we will study the dynamical properties of the solutions of \eqref{eq:1}, both analytically and numerically. We focus in particular on the probabilistic properties of the chaotic solutions.   Then we investigate chaotic solutions of
\begin{equation}
\label{eq:bm}
\begin{array}{l}
\left\{\begin{array}{rcl}
\dfrac{d x}{dt}&=& \vel\\
\dfrac{d \vel}{dt} &=& - \gamma\vel + \sin(2\pi\beta \vel(t-1)),
\end{array}\right.\\
\\
\quad v(t)=\phi(t), \,\,-1 \leq t \leq 0,
\end{array}
\end{equation}
and characterizing the statistical properties as completely as we can. The main result is to show that the equation \eqref{eq:bm} can reproduce experimentally observed data of Brownian motion over a wide range of time scales, in spite of the fact that the evolution equation is deterministic. Therefore, deterministic Brownian motion can be generated from the equation \eqref{eq:bm}.

The outline of the rest of this paper is as follows.  We first perform a bifurcation analysis for equation \eqref{eq:1} in Section \ref{sec:2}. In Section \ref{sec:3} we study the probabilistic properties of the
chaotic solutions numerically. In Section \ref{sec:4}, we numerically examine the dynamics of the chaotic solutions of \eqref{eq:bm} and compare our results with recent experimental measurements of the motion of a Brownian particle \cite{Raizen}.  Section \ref{conjectures} presents five conjectures based on our studies that we have been unable to prove, but which we believe to be true. These conjectures indicate a possible direction for the analytical proof of the existence of deterministic Brownian motion from differential delay equation \eqref{eq:dde-0}. Finally, we conclude the paper with discussion and conclusions in Section \ref{sec:5}.

\section{Bifurcation analysis}
\label{sec:2}

In this section, we commence our study by performing a bifurcation analysis for equation \eqref{eq:1}. We always assume $\gamma = 1$.

The bifurcation structure of the Ikeda equation has been studied several times from different perspectives \cite{Hopf82,Mandel83,Nardone86,Chow92,Hale94,Ernex04}.  Here, we present a complete picture (see Theorem \ref{th:bif} below) for the bifurcation structure of the equation \eqref{eq:1}, which has not, to the best of our knowledge,  appeared previously.

\subsection{Steady state solutions}
\label{sec:2.1}

The steady states of equation \eqref{eq:1} are given by the solutions of
\begin{equation}
\label{eq:ss}
\vel = \sin (2\pi \beta \vel).
\end{equation}
When $\beta \leq 1/(2\pi)$,  equation \eqref{eq:ss}  has only one real solution, namely $\vel = 0$. When $\beta > 1/(2\pi)$,  \eqref{eq:ss} has $(4 [\beta]+1)$ real solutions where $[\beta]$ denotes the integer part of $\beta$. These solutions are separated by critical points which are given by the roots of
 $$
 1  = 2\pi\beta \cos(2\pi \beta \vel),
 $$
 {\it i.e.},
 $$
 \dfrac{k}{\beta}\pm \dfrac{1}{2\pi\beta}\arccos\left (\dfrac{1}{2\pi \beta}\right ),\quad -[\beta]\leq k\leq  [\beta].$$

 Let $\vel^*$ be a steady state of \eqref{eq:1}.  Linearization of \eqref{eq:1} around $\vel = \vel^*$ gives
\begin{equation}
\label{eq:lin}
\dfrac{d \tilde{\vel}}{dt} = - \tilde{\vel} + 2\pi\beta \cos(2\pi \beta v^*) \tilde{\vel}(t-1).
\end{equation}
Thus, the steady state solution $\vel(t) \equiv \vel^*$ is locally stable if and only if
\begin{equation}
\begin{array}{c}
\sec \omega \leq 2\pi \beta \cos (2\pi \beta \vel^*) \leq 1,\\
\mathrm{where}\ \omega + \tan \omega = 0,\ \omega\in (0,\pi).
\end{array}
\end{equation}
In particular, if $2\pi \beta \cos (2\pi \beta \vel^*) = \sec\omega$, the linearized equation \eqref{eq:lin} has a pair of complex conjugate
eigenvalues, and therefore has a periodic solution  with frequency $\omega$. In this case, $(\beta, \vel^*)$ is a
\textit{Hopf bifurcation  point}
of \eqref{eq:1}.  (Throughout this paper we will use the notation $(\beta, \vel^*)$ where $\beta$ is the bifurcation parameter and $\vel^*$ is the bifurcation point.)  Figure \ref{fig:ss} graphically displays the steady states for $0\leq \beta \leq 6$.

\begin{figure}[htbp]
\centering
\includegraphics[width=8cm]{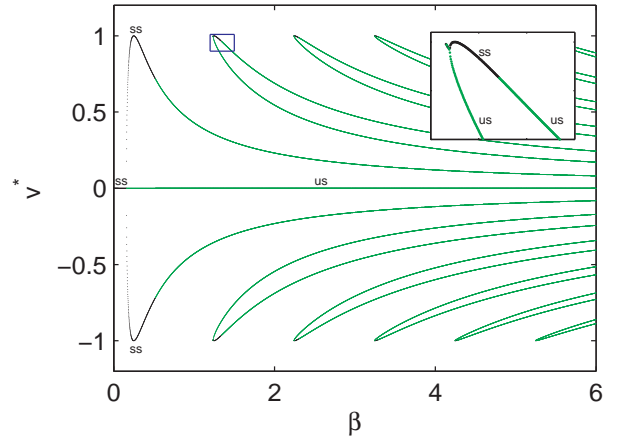}
\caption{(Color online) Steady states $v^*$ of \eqref{eq:1} for different values of $\beta$ (here $\gamma=1$). The inset shows the detail in the small rectangular area. Black denotes locally stable
steady states (ss), while green (light gray) denotes unstable steady states (us).}
\label{fig:ss}
\end{figure}

We will first state the following  Lemma before giving the main results of our bifurcation analysis.

\begin{lemma}
\label{le:0}
For any $\alpha \in (-\infty, 0)\cup \left (\frac{1}{2\pi}, +\infty \right )$, let
\begin{equation}
p(\beta) = 2\pi \beta \sqrt{1 - \left (\dfrac{\alpha}{2\pi \beta}\right )^2} - \arccos \left ( \frac{\alpha}{2\pi \beta} \right ).
\end{equation}
Then for any $k\in \mathbb{N}^*$, the equation
\begin{equation}
\label{eq:pb}
2\pi k = p(\beta)
\end{equation}
has a unique solution $\beta \geq {|\alpha|}/{2\pi}$. In particular, when $\alpha  = 1$ and $k = 0$, we have $\beta =  {1}/{2\pi}$.
\end{lemma}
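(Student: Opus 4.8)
The plan is to regard $\beta$ as the independent variable and reduce the statement to the monotonicity and range of the single scalar function $p$. Setting $a = 2\pi\beta$, so that $p = \sqrt{a^{2}-\alpha^{2}} - \arccos(\alpha/a)$, I would first note that $p$ is continuous on the whole closed half-line $\beta \in [\,|\alpha|/(2\pi),\,\infty)$ (that is, $a \ge |\alpha|$): the only point at which the square root and $\arccos$ meet the endpoints of their domains is the left endpoint $a = |\alpha|$, where $p$ still takes a finite value. Existence and uniqueness of the root of $2\pi k = p(\beta)$ will then follow from the intermediate value theorem together with strict monotonicity on the relevant interval.

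The next step is to differentiate. A direct computation gives
\[
\frac{dp}{d\beta} = 2\pi\,\frac{a^{2}-\alpha}{a\sqrt{a^{2}-\alpha^{2}}},\qquad a = 2\pi\beta,
\]
so that for $a > |\alpha|$ the sign of $p'$ coincides with the sign of $a^{2}-\alpha$. Hence $p$ has at most one interior critical point, at $a_{*} = \sqrt{\alpha}$, which is relevant only when $\alpha > 0$. I would then record the endpoint value and the asymptotics: using $\arccos(1) = 0$ and $\arccos(-1) = \pi$ one gets $p|_{a=|\alpha|} = 0$ if $\alpha > 0$ and $p|_{a=|\alpha|} = -\pi$ if $\alpha < 0$, while $\sqrt{a^{2}-\alpha^{2}} \to \infty$ and $\arccos(\alpha/a) \to \pi/2$ give $p(\beta) \to +\infty$ as $\beta \to \infty$.

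With these ingredients the conclusion splits according to the position of $a_{*}$. If $\alpha < 0$, or if $\alpha \ge 1$ (so that $a_{*} = \sqrt{\alpha} \le \alpha = |\alpha|$ falls at or outside the left endpoint), then $a^{2}-\alpha > 0$ throughout $a > |\alpha|$, so $p$ is strictly increasing and continuous from its endpoint value ($-\pi$ or $0$) up to $+\infty$; since every target $2\pi k$ with $k \ge 1$ exceeds both endpoint values, there is exactly one root $\beta \ge |\alpha|/(2\pi)$. (The degenerate instance $\alpha = 1$, $k = 0$ is just the endpoint itself, $p(1/(2\pi)) = 0$, giving $\beta = 1/(2\pi)$.) If instead $1/(2\pi) < \alpha < 1$, then $a_{*} = \sqrt{\alpha}$ satisfies $\alpha < \sqrt{\alpha}$ and lies strictly inside the domain, so $p$ first decreases on the part where $a \in [\alpha,\sqrt{\alpha}]$ and then increases for $a \ge \sqrt{\alpha}$. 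On the decreasing branch $p$ runs from $p|_{a=\alpha} = 0$ down to its minimum and therefore stays $\le 0$, so it cannot meet any positive level $2\pi k$; on the increasing branch $p$ rises strictly and continuously from a negative value to $+\infty$ and so meets each $2\pi k > 0$ exactly once. This yields the unique solution.

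The step I expect to be the main obstacle is exactly the loss of global monotonicity in the window $1/(2\pi) < \alpha < 1$: one must exclude spurious roots on the initial decreasing branch. The clean way around it is the observation above — that the decreasing branch never climbs above its left-endpoint value $0$, which is automatically below every positive target $2\pi k$ — so that no explicit estimate of the minimum of $p$ is required. Everything else (the sign computation for $p'$, the endpoint values via $\arccos(\pm 1)$, and the $a \to \infty$ asymptotics) is routine.
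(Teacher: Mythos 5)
Your proof is correct, and its skeleton is the same as the paper's: evaluate $p$ at the left endpoint $\beta = |\alpha|/(2\pi)$, note $p(\beta)\to+\infty$, and combine monotonicity with the intermediate value theorem. Your derivative, $p'(\beta) = (4\pi^2\beta^2-\alpha)\big/\bigl(\beta\sqrt{4\pi^2\beta^2-\alpha^2}\bigr)$ in the original variable, agrees with the paper's formula (which contains two typographical slips). The genuine difference is that your case analysis is not superfluous caution --- it repairs an actual flaw in the published argument. The paper asserts that $p'(\beta) > \alpha(2\pi\alpha-1)\big/\bigl(\beta\sqrt{\,\cdot\,}\bigr) > 0$ for \emph{all} $\beta > |\alpha|/(2\pi)$, i.e.\ that $p$ is globally increasing; but the sign of $p'$ is the sign of $4\pi^2\beta^2-\alpha$, which is negative for $2\pi\beta \in (\alpha, \sqrt{\alpha})$, a nonempty interval exactly in the window $1/(2\pi)<\alpha<1$ that you isolate. (The paper's claimed lower bound amounts to $2\pi\beta^2 > \alpha^2$, which does not follow from the hypothesis $2\pi\beta > |\alpha|$.) Your observation that on the initial decreasing branch $p$ never rises above its endpoint value $0$, hence never meets any level $2\pi k$ with $k\geq 1$, is precisely what is needed to restore uniqueness there. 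The gap is harmless for the paper's application, since the lemma is only invoked with $\alpha = 1$ and $\alpha = \sec\omega < -1$, both outside the problematic window; but as a proof of the lemma as stated, yours is the complete one.
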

\begin{proof}
Since $p  (\frac{|\alpha|}{2\pi}  ) \leq 0$, and when $\beta > \frac{|\alpha|}{2\pi}$,
$$p'(\beta) = \dfrac{4\pi \beta^2 - \alpha}{\beta \sqrt{4\pi^2 \beta^2 - \alpha}} > \dfrac{\alpha (2\pi \alpha - 1)}{\beta \sqrt{4\pi^2 \beta^2 -
\alpha}} > 0,$$
the first part of the Lemma follows. It is easy to verify that when $\alpha =1$, then $p (\frac{1}{2\pi}  ) = 0$.
\end{proof}

In the following, we define sequences $\{a_k\}$ and $\{b_k\}$ such that
\begin{equation}
\label{eq:ak0}
2\pi k = 2\pi a_k \sqrt{1 - \left (\dfrac{1}{2\pi a_k}\right )^2} - \arccos \left ( \dfrac{1}{2\pi a_k}\right ),\quad k\in \mathbb{N}^*,
\end{equation}
and
\begin{equation}
\label{eq:bk0}
2\pi k = 2\pi b_k \sqrt{1 - \left (\dfrac{\sec \omega}{2\pi b_k}\right )^2} -  \arccos \left (\frac{\sec\omega}{2\pi b_k} \right ),\quad k\in \mathbb{N}^*.
\end{equation}
Note that $a_0 = 1/2\pi$.  Furthermore, since $\sec^2\omega > 1$ and $\arccos ( \frac{\sec \omega}{2\pi\beta} ) > \arccos  ( \frac{1}{2\pi \beta} )$, we have
$$
\begin{array}{l}
2\pi \beta \sqrt{1 - \left (\dfrac{\sec \omega}{2\pi\beta}\right )^2} -  \arccos \left ( \dfrac{\sec\omega}{2\pi \beta}\right ) \\
\hspace{1.0cm}< 2\pi \beta \sqrt{1- \left (\dfrac{1}{2\pi
\beta}\right )^2} - \arccos \left ( \dfrac{1}{2\pi \beta}\right ),
\end{array}
$$
and therefore $b_k > a_k$. When $k\to\infty$, the solution of \eqref{eq:pb} is approximately
$$
\beta \simeq  k + \dfrac{1}{4} + \dfrac{\alpha (\alpha - 2)}{4\pi^2 k} + o(k^{-1}).
$$
Thus, for large $k$ we have
\begin{equation}
\label{eq:ab}
b_k - a_k  = \dfrac{(\sec\omega - 1)^2}{4\pi^2 k}+o(k^{-1}).
\end{equation}
Note that the sequences $\{a_k\}$ and $\{b_k\}$ can be ordered as
\begin{equation}
0 < a_0 < b_0 < a_1 < b_1 \cdots < a_k < b_k < \cdots.
\end{equation}
In the following Theorem, we prove that $\{a_k\}$ are saddle node bifurcation points and $\{b_k\}$ are Hopf bifurcation points of Eq. \eqref{eq:1}.

\begin{theorem}
\label{th:bif}
Consider Eq. \eqref{eq:1} and its steady state solutions. Let $\{a_k\}$ and $\{b_k\}$ be defined as above. Then:
\begin{enumerate}
\item[\textnormal{(1)}] When $0\leq\beta<a_0$, \eqref{eq:1} has only one steady state $\vel^* = 0$, and it is locally stable.
\item[\textnormal{(2)}] When $\beta = a_0$, $(\beta, \vel) = (a_0, 0)$ is a  pitchfork bifurcation point of \eqref{eq:1}.
\item[\textnormal{(3)}] When $\beta > a_0$, the steady state $\vel^* = 0$ is unstable.
\item[\textnormal{(4)}] For any $k\in \mathbb{N}^*$, let
\begin{equation}
\label{eq:yk}
y_k = \dfrac{k}{a_k} + \dfrac{1}{2\pi a_k} \arccos \left ( \dfrac{1}{2\pi a_k}\right ).
\end{equation}
Then $(\beta, \vel) = (a_k, \pm y_k)$ are saddle node bifurcation points of \eqref{eq:1}.
\item[\textnormal{(5)}] For any $k\in \mathbb{N}^*$, let
\begin{equation}
\label{eq:zk}
z_k = \dfrac{k}{b_k} + \dfrac{1}{2\pi b_k} \arccos \left (\frac{\sec \omega}{2\pi b_k}\right ).
\end{equation}
Then $(\beta, \vel) = (b_k, \pm z_k)$ are Hopf bifurcation points of \eqref{eq:1}.
\item[\textnormal{(6)}] For every $k\in \mathbb{N}_0$, and $a_k$ as defined above, there is a  function $f_k(\beta)$ which is continuous on $[a_k, \infty)$, such
that $f_k(a_k) = y_k$, and when $\beta \geq a_k$,  $\vel = \pm f_k(\beta)$ satisfies \eqref{eq:ss}, and
\begin{equation}
\label{eq:u1}
1 - 2\pi \beta \cos (2\pi \beta f_k(\beta)) \leq 0.
\end{equation}
The steady state solutions $\vel(t) \equiv  \pm f_k(\beta)$ are unstable.
\item[\textnormal{(7)}] For every $k\in \mathbb{N}_0$, and $a_k, b_k$ as defined above, there is a  function $g_k(\beta)$, which is continuous on $[a_k, \infty)$,
such that $g_k(a_k) = y_k$, $g_k(b_k) = z_k$, and when $\beta \geq a_k$, $\vel=\pm g_k(\beta)$ satisfy \eqref{eq:ss}, and
\begin{equation}
\label{eq:u2}
1 - 2\pi \beta \cos (2\pi \beta g_k(\beta)) \geq 0.
\end{equation}
Further,
\begin{enumerate}
\item[\textnormal{(a)}] When $a_k < \beta < b_k$, the steady state solutions $\vel(t) \equiv \pm g_k(\beta)$ are locally stable.
\item[\textnormal{(b)}] When $\beta > b_k$, the steady state solutions $\vel(t) \equiv \pm g_k(\beta)$ are unstable.
\end{enumerate}
\item[\textnormal{(8)}] When $\beta$ increases past the Hopf bifurcation point $b_k$, the two   steady state solutions $\vel(t)\equiv \pm g_k(\beta)$ lose
stability and generate a periodic solution, with angular frequency $\omega$.  Therefore there exists a sequence $\{c_k\}$, such that the periodic
solutions generated from the Hopf bifurcations are stable when $b_k < \beta < c_k$.
\end{enumerate}
\end{theorem}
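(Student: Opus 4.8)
The plan is to reduce every stability assertion to the location of the single quantity $A(\beta,v^*) := 2\pi\beta\cos(2\pi\beta v^*)$, the coefficient of the delayed term in the linearization \eqref{eq:lin}, relative to the two thresholds $1$ and $\sec\omega$, using the real-coefficient criterion $\sec\omega\le A\le 1$ already established above. The eight parts then split into three tasks: an elementary analysis of the trivial state, a parametric description of the nontrivial steady-state branches together with the computation identifying the folds and eigenvalue crossings with the sequences $\{a_k\},\{b_k\}$, and two applications of standard local bifurcation theorems.

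For (1)--(3) I would evaluate at $v^*=0$, where $A=2\pi\beta$. Since $2\pi\beta>0>\sec\omega$ the lower inequality is automatic, so stability is governed solely by $2\pi\beta\le 1$, i.e.\ by $\beta\le a_0=1/(2\pi)$; this gives local stability for $\beta<a_0$ and instability for $\beta>a_0$. At $\beta=a_0$ a single real eigenvalue crosses zero, and since $\sin$ is odd the equation is equivariant under $v\mapsto -v$; this forced symmetry turns the zero-eigenvalue crossing into a pitchfork rather than a fold, proving (2). It is worth noting that the trivial branch is exactly the $k=0$ unstable branch $f_0\equiv 0$ of (6), so (3) and (6) for $k=0$ coincide.

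The core of the proof is a parametrization of the steady-state set. Writing $\theta=2\pi\beta v$, the fixed-point equation $v=\sin(2\pi\beta v)$ becomes $\theta=2\pi\beta\sin\theta$, so each branch is $\beta=\theta/(2\pi\sin\theta)$, $v=\sin\theta$, and along it $A=\theta\cot\theta$. Hence the folds (turning points of the branch, where $A=1$) occur precisely at $\tan\theta=\theta$; taking the critical angle $\theta^*=2\pi k+\arccos(1/(2\pi\beta))$ in the $k$-th period and eliminating $v^*=\sin\theta^*=\theta^*/(2\pi\beta)$ reproduces verbatim the defining relation \eqref{eq:ak0} for $a_k$ and the value $y_k$ of \eqref{eq:yk}, which is (4). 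Repeating the computation with the threshold $\sec\omega$ in place of $1$---and verifying the transversality condition $\tfrac{d}{d\beta}\Re\lambda\neq 0$ at $\lambda=i\omega$ together with simplicity of the crossing pair---yields \eqref{eq:bk0}, \eqref{eq:zk} and the Hopf points of (5). For (6)--(7) the implicit function theorem applies wherever $A\neq 1$, so $v$ depends smoothly on $\beta$ off the folds; each fold at $\beta=a_k$ ($k\ge1$), and the pitchfork at $\beta=a_0$ when $k=0$, splits the local steady-state curve into two monotone pieces, the $A\ge1$ piece defining the always-unstable branch $f_k$ (giving \eqref{eq:u1}) and the $A\le1$ piece defining $g_k$. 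On the latter $A-\sec\omega$ changes sign exactly at the value of $\theta$ corresponding to $b_k$, yielding stability for $a_k<\beta<b_k$ and instability for $\beta>b_k$, which is (7a)--(7b). The monotonicity of the parametrization---equivalently the estimate $p'(\beta)>0$ of Lemma \ref{le:0}, applied with $\alpha=1$ and $\alpha=\sec\omega$---is what guarantees that no further folds occur and that $f_k,g_k$ extend continuously over the whole of $[a_k,\infty)$ in the prescribed order $0<a_0<b_0<a_1<\cdots$.

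Finally, (8) is the Hopf bifurcation theorem for retarded functional differential equations applied at each $b_k$: the transversality established in (5) produces a one-parameter family of periodic orbits of frequency near $\omega$, and the claimed stability interval $(b_k,c_k)$ exists provided the bifurcation is supercritical. I expect this last point to be the genuine obstacle, since the stability of the emergent orbit---and hence the very existence of $c_k>b_k$---is not decided by the linear spectrum and requires computing the sign of the first Lyapunov (cubic normal-form) coefficient through a center-manifold reduction, which is computationally heavy for a delay equation. A secondary difficulty is the global bookkeeping in (6)--(7): ruling out spurious tangencies or branch reconnections so that the $f_k,g_k$ are correctly and globally defined relies essentially on the monotonicity supplied by Lemma \ref{le:0}.
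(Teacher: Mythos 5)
Your proposal is correct and reaches the same conclusions, but its core takes a genuinely different route from the paper's. Both arguments reduce everything to the position of $A=2\pi\beta\cos(2\pi\beta \vel^*)$ relative to the thresholds $1$ and $\sec\omega$, and both pin down $a_k,b_k$ via Lemma \ref{le:0}; the difference is in how the branches $f_k,g_k$ are constructed and tracked. The paper proves points (6)--(7) with two implicit-function-theorem lemmas (Lemma \ref{le:1}, giving $\vel$ as a function of $\beta$, and Lemma \ref{le:2}, giving $\beta$ as a function of $\vel$), extends each branch by continuation-plus-contradiction (a failure of continuation would force $\cos(2\pi\beta^*\vel^*)=0$ and hence $|\vel^*|=1$, or a fold $a_{k'}$ strictly between $a_k$ and $b_k$), and reads off stability from the monotonicity formulas \eqref{eq:dg} and \eqref{eq:df} applied to $(1-2\pi\beta\cos(2\pi\beta g_k(\beta)))^2$. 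You instead parametrize the whole nontrivial steady-state set by $\theta=2\pi\beta \vel$, so that $\beta=\theta/(2\pi\sin\theta)$, $\vel=\sin\theta$, $A=\theta\cot\theta$, and folds ($\tan\theta=\theta$), Hopf crossings ($\theta\cot\theta=\sec\omega$) and branch monotonicity ($dA/d\theta<0$, since $\sin\theta\cos\theta<\theta$) become elementary one-variable calculus; this eliminates the chart-gluing and contradiction arguments entirely. Your route also repairs a point the paper glosses over: the theorem asserts (6) for all $k\in\mathbb{N}_0$, but the paper's proof of (6) is written only for $k\in\mathbb{N}^*$, and indeed Lemma \ref{le:2} degenerates at $(a_0,y_0)=(1/(2\pi),0)$ because $2\pi y_0\cos(2\pi a_0 y_0)=0$; your identification $f_0\equiv 0$ with the trivial branch handles $k=0$ cleanly. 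Two further remarks on rigor: your insistence on transversality in (5) goes beyond the paper, which simply defines a Hopf point as a parameter value where the linearization \eqref{eq:lin} has eigenvalues $\pm i\omega$, so within the paper's conventions (5) needs only the algebra you already did; and on (8) you are right that the existence of the stability interval $(b_k,c_k)$ hinges on supercriticality, i.e.\ on a center-manifold/first-Lyapunov-coefficient computation that is not determined by the linear spectrum --- the paper does not supply this either, declaring (8) ``obvious,'' so your explicit flagging of this as the genuine remaining obstacle is, if anything, more careful than the paper's own treatment.
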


Proof of Theorem \ref{th:bif} is given in the Appendix.

From Theorem \ref{th:bif}, when $\beta$ increases from  $0$, in any
interval $\beta \in (a_k, b_k)$, equation \eqref{eq:1} has two stable steady state solutions. The length of the intervals $(a_k, b_k)$ tends to zero as $1/k$ as $k\to\infty$.  In other situations, however, all steady states are unstable, and therefore complicated dynamical behaviors may be expected.  An exploration of the nature of these constitutes the remainder of this paper.

\subsection{Periodic solutions and chaotic attractors}

In this section, we numerically investigate the long term behavior of the solutions of equation \eqref{eq:1}. For each value of the  parameter $\beta > 0$, we solve
equation \eqref{eq:1} to obtain $100$ independent sample solutions, each with randomly selected constant initial function
$$
\vel(t) = \vel_0\quad (-1 \leq t \leq 0),
$$
where $\vel_0\in (-1,1)$ and is uniformly distributed. Each solution is obtained using Euler's method (with a time step $\Delta t = 0.001$) up to $t=500$, such that the solution reaches a stable state (either oscillatory or steady state), and the resulting data from $300\leq t \leq 500$ are used for further analysis as detailed below.

To distinguish oscillatory solutions from constant solutions in the simulation, we investigated the upper and lower bounds of $\vel(t)$ in $300 \leq t\leq 500$, denoted by $\vel_{\max}$ and $\vel_{\min}$, respectively. Therefore, a solution is considered to have approached a stable steady state if $\vel_{\max}\simeq \vel_{\min}$, and approached a stable oscillatory solution between $ \vel_{\min}$ and $\vel_{\max}$ if $\vel_{\min} \ll \vel_{\max}$.

Figure \ref{fig:bif} shows the simulation results. For each value of $\beta\in (0,3)$, there are $100$ pairs of dots from 100 initial functions, corresponding respectively to
$v_{\min}$ (red (light gray) dots) and $v_{\max}$ (blue dots) of a solution.  Stable steady state solutions are shown by the superposition of blue and red (light gray) dots.
We are interested in the oscillatory solutions, which are indicated by well separated blue and red (light gray) dots in Figure \ref{fig:bif}.  There are two
types of oscillatory solutions. Regular periodic solutions appear when $\beta$ is close to (but greater than) the Hopf bifurcation points $b_k$. The
amplitude of these regular solutions only depend on the parameter $\beta$. Irregular oscillatory solutions occur for almost all  $\beta$ ($\beta >
0.85$ except small gaps at $\beta \in (1.019, 1.033)$ and $\beta \in (1.270, 1.360)$, respectively. The amplitude of these irregular oscillatory
solutions depend on the parameter $\beta$ as well as the initial functions.

The bifurcation diagram in Figure \ref{fig:bif} shows clear evidence for multi-stability of the solutions of equation \eqref{eq:1}:
\begin{enumerate}
\item When $a_0 < \beta < b_0$, there are two stable steady states.
\item When $b_0 < \beta < c_0$, there are two stable periodic solutions.
\item When $a_k < \beta < b_k  (k\in \mathbb{N}^*)$, there are two stable state states, and stable irregular oscillations.
\item When $b_k < \beta < c_k (k\in \mathbb{N}^*)$ there are two stable periodic solutions, and stable irregular oscillations.
\end{enumerate}
Some sample solution examples for different values of $\beta$ are shown in Figure \ref{fig:sam}, and these also illustrate the existence of multi-stability of solutions dependent on the initial function.

\begin{figure}[htbp]
\centering
\includegraphics[width=10cm]{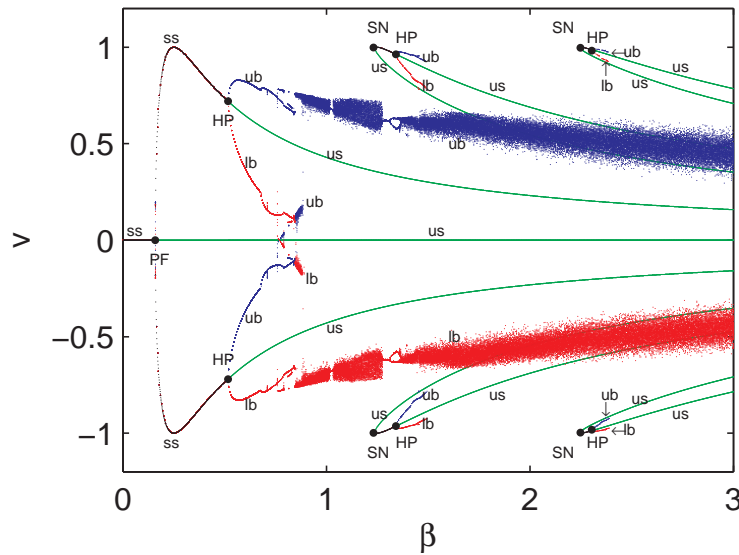}
\caption{(Color online) Bifurcation diagram of \eqref{eq:1} for different $\beta$ (here $\gamma=1$). Here PF denotes the occurrence of a pitchfork bifurcation, HP stands for a Hopf bifurcation, and SN a saddle node bifurcation. In the diagram, black dots denote   stable steady states (ss),  green denotes unstable steady states (us) (also refer to Figure \ref{fig:ss}),  blue is used to indicate the upper bound (ub) of oscillatory solutions, and red (light gray) indicates the lower bound (lb) of oscillatory solutions. }
\label{fig:bif}
\end{figure}

\begin{figure*}[htbp]
\centering
\includegraphics[width=14cm]{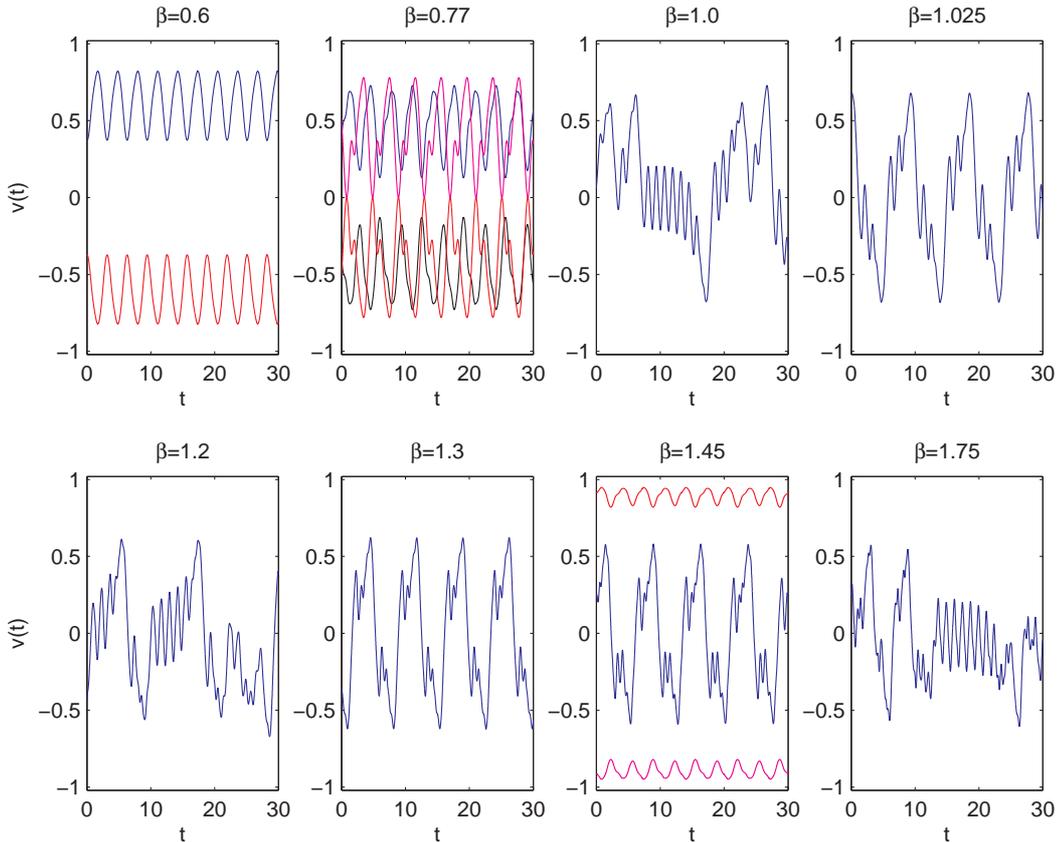}
\caption{(Color online) Sample solution segment examples for different values of $\beta$ (from small to large) as indicated in each panel. The time point $0$ in figures correspond to $t=400$ in the simulations. Each panel contains between one and four solution curves (marked by different colors and also locations), each of which corresponds to a stable oscillatory solution. For a given values of $\beta$, the initial functions of these solutions are different from each other, and taken as constant in the interval $-1 \leq  t \leq 0$.  The constant initial functions are: $\beta=0.6$: $v_0 = 0.55,   -0.55$; $\beta=0.77$: $ v_0 = 0.41129,   -0.41129,  0.630575,  -0.630575$; $\beta=1.0$: $ v_0=0.55$; $\beta=1.025$: $ v_0=0.55$; $\beta=1.2$: $ v_0=0.55$; $\beta=1.3$: $ v_0=0.55$; $\beta=1.45$: $ v_0 = 0.975,  0.1,  -0.975$; $\beta=1.75$: $ v_0=0.1$. Here $\gamma=1$.
}
\label{fig:sam}
\end{figure*}

\section{Probabilistic Properties}
\label{sec:3}

In the numerical bifurcation analysis of Section \ref{sec:2}, we have shown that when $\gamma=1$ and $\beta > 0.85$, the differential delay equation \eqref{eq:1} has
irregular oscillatory solutions that display chaotic behavior. In this section, we numerically study the probabilistic properties of these irregular solutions.  We will show in Section \ref{ss:quasigaussian} that when $\beta$ is sufficiently large, on a sufficiently long time scale these chaotic solutions behave like a noise source with a  ``truncated'' Gaussian density.

\subsection{Numerical scheme}
\label{sec:ns}

Throughout this section, the probabilistic properties of solutions of equation \eqref{eq:1} are studied numerically. In the numerical simulations, for a given set of parameters, we solve the equation \eqref{eq:1} with a randomly selected constant initial function
\begin{equation}
\label{eq:ini}
\vel(t) = \vel_0 \in (-1, 1),\quad (-1 \leq t\leq 0),
\end{equation}
where $\vel_0$ is drawn from a uniformly distributed density.
The solution $\vel(t)$ is solved using Euler's method (with a time step $\Delta t = 0.001$) up to $t=10^5$, and is sampled every $10^3$ steps to generate a time series $\{\vel_n\}$, where $\vel_n = \vel(n\times 10^3 \Delta t)$. The resulting time series of values $\{\vel_n\}$ is used to characterize the statistical properties of the solution. In particular,  we focus on the mean value
$\mu$, the upper bound $K$, the standard deviation $\sigma$, and the excess kurtosis $\gamma_2$ of the time series, which are respectively defined by
\begin{equation}
\label{eq:pp}
\begin{array}{c}
\displaystyle
\mu = \dfrac{1}{N}\sum_{n=1}^N \vel_n,\ K = \max_n |\vel_n|,\ \sigma^2 = \dfrac{1}{N}\sum_{n=1}^N (\vel_n - \mu)^2,\\
\displaystyle
\gamma_2 = \dfrac{\mu_4}{\sigma^4} - 3,\ \mathrm{where}\ \mu_4 = \dfrac{1}{N}\sum_{n=1}^N (\vel_n - \mu)^4.
\end{array}
\end{equation}
The excess kurtosis $\gamma_2$ measures the sharpness of the density of the sequence, and a value of $\gamma_2=0$ is characteristic of a normal Gaussian distribution.

In the following discussion, we will show that when $\beta$ is outside the region of
bistability, {\it i.e.},
\begin{equation}
\label{eq:defI}
 \beta \in I = \left (\dfrac{1}{2\pi},+\infty \right )\backslash \bigcup_{k=0}^\infty [a_k , c_k],
\end{equation}
the statistical properties are independent of the initial function $\vel_0$ and sampling frequencies. Therefore, the quantities defined by \eqref{eq:pp} only depend on the  parameters $\beta$ and $\gamma$, and this dependence is discussed below.

\subsection{Stationary density of solutions}\label{ss:stationary}

From the bifurcation analysis in Section \ref{sec:2}, equation \eqref{eq:1} displays bistability when $\beta \in [a_k, c_k]$ for some $k\in \mathbb{N}$. This suggests that when
\begin{equation}
\beta \in I_0 = \bigcup_{k\in \mathbb{N}} [a_k, c_k],
\end{equation}
a solution of \eqref{eq:1} with a randomly selected initial function will converge to one of the stable branches (either a steady state, or a periodic solution, or an irregular oscillatory solution) as shown by Figure \ref{fig:bif}. Thus, the stationary density of all solutions of \eqref{eq:1} is expected to be multi-modal, and therefore the limiting statistical properties of a solution (as mirrored in the density constructed along the solution trajectory) depend on the initial condition. The upper two panels of Figure \ref{fig:density}a show the multi-modal distributions of the stationary densities when $\beta = 1.25$ and $\beta = 1.35$, respectively. The results are obtained from $10^5$ independent solutions at $t = 100$, each with a randomly selected constant initial function as given by \eqref{eq:ini}.

In the following discussion, we focus on the alternative situation in which
\begin{equation}
\beta \in I = \left (\dfrac{1}{2\pi}, \infty \right )\backslash I_0.
\end{equation}
In this case, the numerical bifurcation analysis shown in Figure \ref{fig:bif} indicates that any solution of \eqref{eq:1} converges to an irregular oscillatory solution irrespective of the initial function. (Note that solutions with different initial functions will not, in general, converge to the same solution as can be seen  by multiple values of the upper and lower bounds of the oscillatory solutions in Figure \ref{fig:bif}.) Despite the fact that the solutions in such situations are not the same,  we will see that these solution trajectories share the same statistical properties.  Figure \ref{fig:density}a (black curve in the lowest panel) shows the stationary density obtained from $10^5$ independent solutions (here $\beta = 2.0 \in I$), each with a randomly selected constant initial function \eqref{eq:ini}. The result is a uni-modal density of the distribution of solution values along the trajectory. Alternatively, the same density can also be obtained through a time series $\{\vel_n\}$ of a solution with randomly selected constant initial function (Figure \ref{fig:density}a, red (light gray) curve in the lowest panel, refer to Section \ref{sec:ns} for details). Furthermore, we have found that the same density of the distribution is obtained when we choose different {\it forms} for the initial function, such as a sinusoidal initial function, or a polynomial initial function (data not shown). These results strongly suggest that these irregular solutions are ergodic in some sense, i.e., the statistical properties of one solution are the same as those  of an
ensemble of independent solutions {\footnote {It is important to realize, however, that the notion of ergodicity for an infinite dimensional semi-dynamical system, like the differential delay equations we are studying, is not well defined and has resisted all attempts to do so.}}.

In what follows, we focus on quantifying the statistical properties of these irregular
solutions.  An example of one of these irregular solutions is shown in Figure \ref{fig:density}b, with the corresponding power spectrum $w(f)$ shown in Figure \ref{fig:density}c. Note that the power spectrum is essentially flat with no predominant characteristic frequency, indicating that the solution is, indeed, chaotic.

\begin{figure}[htbp]
\centering
\includegraphics[width=8cm]{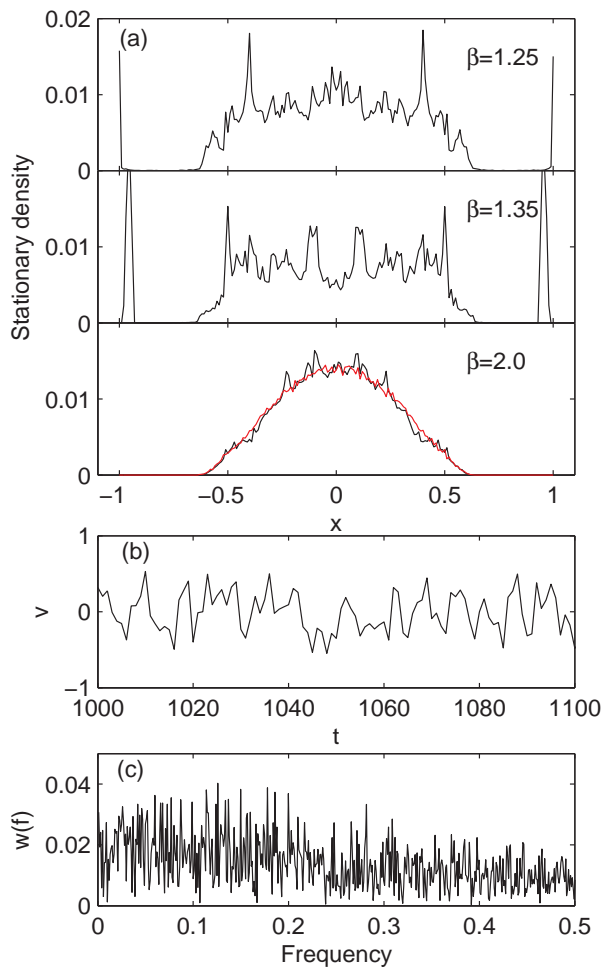}
\caption{(Color online) (a) Stationary density functions of solutions of \eqref{eq:1} for different values of $\beta$ (as shown in the panels).  The density functions are obtained from $10^5$ independent solutions, each with a randomly selected constant initial function as in Eq. \eqref{eq:ini}. The red (light gray) curve in the lowest panel is the density function obtained from a random solution trajectory with a constant initial function.   (b) Example of a segment of a single solution with $\beta = 2.0$,   and initial function $v_0 = 0.04$. (c). Power spectrum of the solution in (b). In the simulations, $\gamma = 1.0$. }
\label{fig:density}
\end{figure}

\subsection{Dependence of the statistical properties on $\beta$}
\label{sec:3.2}
Now, we take $\gamma = 1$, and $\beta \in [1, 50]$ to numerically study the statistical properties of the irregular solutions. For each $\beta$ the sampled time series $\{\vel_n\}$ of a solution $\vel(t)$ is used to obtain the mean value $\mu$, the upper bound $K$, the standard deviation $\sigma$, and the excess kurtosis $\gamma_2$ (refer Eq. \eqref{eq:pp} for these definitions).
Figure \ref{fig:parb} shows these four statistical indicators as functions of the parameter $\beta$.

Figure \ref{fig:parb}a shows the mean value as a function of $\beta$, indicating that $\mu(\beta) \simeq 0$. Figure \ref{fig:parb}b shows the bound $K$ as a function of $\beta$. The numerical results show that $K$ decreases with $\beta$, and can be accurately approximated by
\begin{equation}
K = \dfrac{1}{0.68\sqrt{\beta}+0.60}.
\end{equation}
Note that as $\beta\to\infty$, solutions of \eqref{eq:1} are bounded  and $K$ varies as $\beta^{-1/2}$. Figure \ref{fig:parb}c shows the standard deviation $\sigma$ as a function of $\beta$. The standard deviation decreases with $\beta$, and can be
fitted with
\begin{equation}
\sigma = 0.32 \beta^{-1/2}.
\end{equation}
As in the case of the upper bound, when $\beta\to\infty$, we also find that $\sigma$ varies as $\beta^{-1/2}$.

Figure \ref{fig:parb}d shows the excess kurtosis $\gamma_2$ as a function of $\beta$, and the numerical results reveal that the excess kurtosis increases with $\beta$
towards $0$, approximately as $-1/\beta$.  The negative value indicates that the distribution is platykurtic (the tail of the distribution is thinner relative to a Gaussian). This is because the solution is bounded, and therefore the tail is truncated.  Note that a larger $\beta$ means a smaller \textbf{absolute} value of the excess kurtosis, and thus that the distribution is more like a Gaussian distribution.

\begin{figure}[htbp]
\centering
\includegraphics[width=8cm]{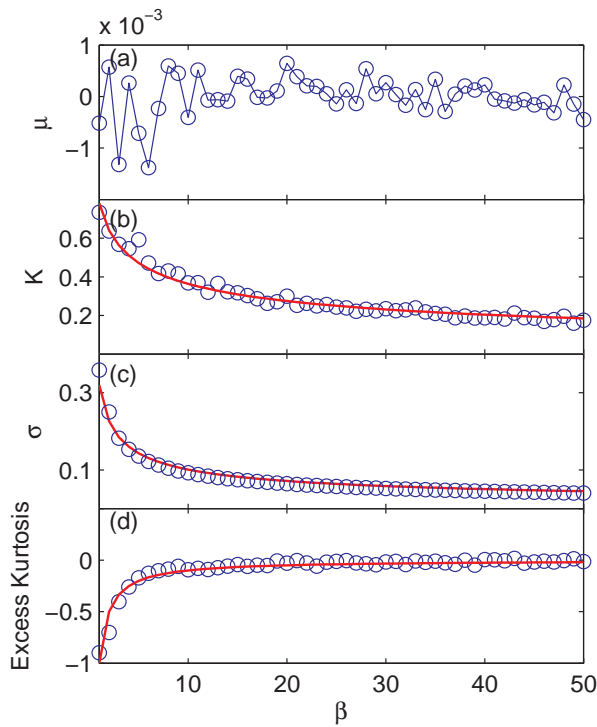}
\caption{(Color online) (a) Mean value $\mu$ as  a function of $\beta$. (b) The upper bound $K$ as a function of $\beta$. The solid curve shows the fit to $K = 1/(0.68\sqrt{\beta} + 0.60)$. (c) Standard deviation $\sigma$ as a function $\beta$.  The solid curve is the graph of  $\sigma = 0.32 \beta^{-1/2}$. (d) Excess kurtosis $\gamma_2$ as a function of $\beta$. Solid curve shows the fit to $\gamma_2 \simeq -1/\beta$. Remember that for all of these results, $\gamma = 1$,  and the initial functions are constants as in \eqref{eq:ini}.}
\label{fig:parb}
\end{figure}

\subsection{Dependence of the statistical properties on $\gamma$}
\label{sec:3.3}

We now fix $\beta = 20$ and study the dependence of the statistical properties on $\gamma$. Figure \ref{fig:pargamma} shows the
simulation results with the same statistical indicators plotted as in Figure \ref{fig:parb}.
The red (light gray) curves in Figure \ref{fig:pargamma} are fit by
\begin{eqnarray}
\label{eq:K0}
K(\beta,\gamma) &=& \dfrac{1}{\sqrt{\gamma} (0.68\sqrt{\beta} + 0.60 \sqrt{\gamma})},\\
\label{eq:sigma}
\sigma(\beta, \gamma) &=& \dfrac{0.32}{\sqrt{\beta \gamma}},\\
\label{eq:kur}
\gamma_2(\beta,\gamma) &=& - \dfrac{\gamma}{\beta}.
\end{eqnarray}

The functions \eqref{eq:K0}-\eqref{eq:kur} give the general dependence of the statistical indicators with equation parameters $\beta$ and $\gamma$, and are obtained as follows. First, we rescale   equation \eqref{eq:1} by introducing $u = \gamma \vel, \beta' = \beta/\gamma$. Then $u(t)$ satisfies
\begin{equation}
\label{eq:u}
\dfrac{d u}{d t} =\gamma ( - u + \sin (2\pi \beta' u(t - 1))).
\end{equation}
The statistical indicators $K$, $\sigma$ and $\gamma_2$ for solutions $u(t)$ of \eqref{eq:u} are independent of $\gamma$ (data not shown), and depend on $\beta'$ through the same functions as in Figure \ref{fig:parb}.  Therefore, we obtain the functions \eqref{eq:K0}-\eqref{eq:kur} with the scaling $\vel = u/\gamma$.

\begin{figure}[htbp]
\centering
\includegraphics[width=8cm]{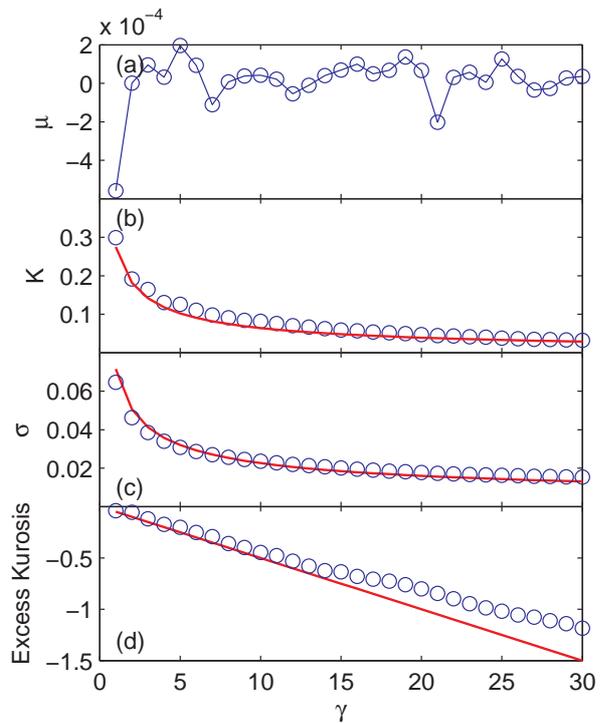}
\caption{(Color online) (a) Mean value $\mu$ as  a function of $\gamma$. (b) The upper bound $K$ as a function of $\gamma$.  (c) Standard deviation $\sigma$ as a
function $\gamma$.  (d) Excess kurtosis as a function of $\gamma$.  Solid curves in (b)-(d) show the fits \eqref{eq:K0}-\eqref{eq:kur}. Here $\beta = 20$  and the initial functions are constants as in Eq. \eqref{eq:ini}. Results for $\gamma > 30$ are not shown due to numerical instability.}
\label{fig:pargamma}
\end{figure}

\subsection{Correlation function}\label{ss:correlation}

Next, we investigate the correlation function of a solution of the differential delay  equation \eqref{eq:1}.  The normalized correlation function of a solution is
defined as
\begin{equation}
C(r) = \lim_{T\to\infty}\dfrac{\int_0^T \vel(t)\vel(t+r) d t}{\int_0^T \vel(t)^2 d t}.
\end{equation}

Figure \ref{fig:corr}a shows the correlation function $C(r)$ for different values of $\beta$ (with $\gamma = 1$). From Figure \ref{fig:corr}, the correlation function can be approximated as an exponential function of the form
$$
C(r) \simeq e^{-r/t_0},
$$
where the constant $t_0$ gives the correlation time. Figures \ref{fig:corr}b-c show that the correlation time is largely independent of $\beta$, and that it is approximately given by $1/\gamma$.

\begin{figure*}[htbp]
\centering
\includegraphics[width=16cm]{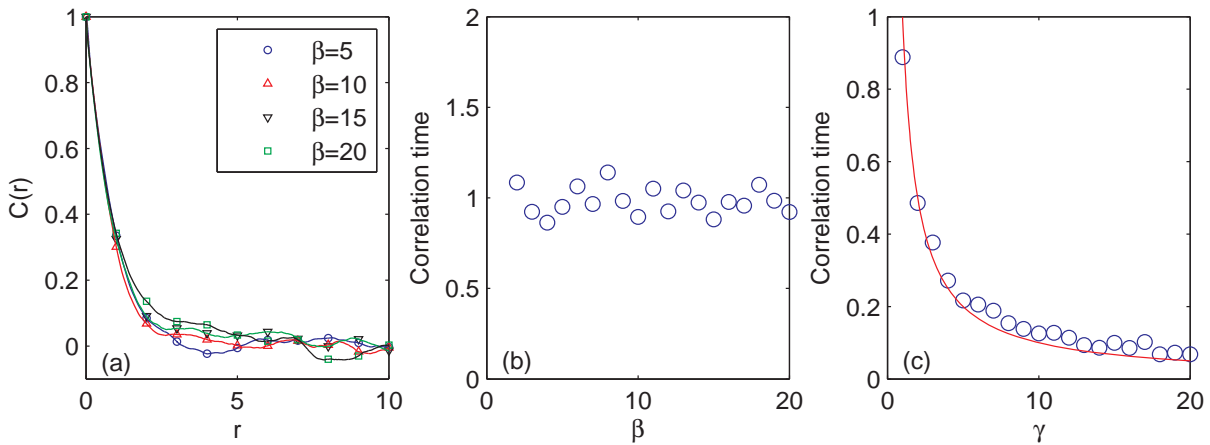}
\caption{(Color online) (a) Correlation function $C(r)$.  Here, $\gamma =   1$, and $\beta = 5$ (blue circles), $10$ (red up triangles), $15$ (black down triangles), $20$ (green squares), respectively. (b) Correlation time as a function of $\beta$ (with $\gamma = 1$).  (c) Correlation time as a function of $\gamma$ (with $\beta = 20$), solid curve is the fit  with $t_0 =
1/\gamma$.}
\label{fig:corr}
\end{figure*}

\subsection{Quasi-Gaussian distribution}\label{ss:quasigaussian}

From our numerical results in Sections \ref{sec:3.2} and \ref{sec:3.3}, it is clear that the excess kurtosis $\gamma_2$ of one of the irregular solutions of \eqref{eq:1} varies with $\beta$ and $\gamma$ according to $\gamma_2 \simeq -\gamma/\beta$. Thus, the
distribution approaches a Gaussian-like distribution when $\beta$ is large (and $\gamma$ is fixed), but one with a truncated tail so that it is supported on a set of finite measure. We call such a truncated Gaussian distribution a \textit{quasi-Gaussian distribution}, and consider these further in this section.

Let $\mu$ and $\sigma$ be the mean and standard deviation of a quasi-Gaussian noise, and assume that the noise signal is supported on an interval $[\mu - K, \mu + K]$.  Then the density function is given by
\begin{equation}
\label{eq:den1}
p(\vel; \mu, \sigma, K) = \left\{
\begin{array}{ll}
\mathcal{C} e^{-\frac{(\vel-\mu)^2}{2\sigma^2}}
& \qquad \mathrm{if} \quad |\vel-\mu|\leq K\\
0 & \qquad   \mathrm{ other\ wise,}
\end{array}\right.
\end{equation}
where
\begin{equation}
\mathcal{C}= \dfrac{1}{\sqrt{2\pi}\sigma (\Phi(K/\sigma) - \Phi(-K/\sigma))}
\end{equation}
and
\begin{equation}
\Phi(z) = \int_{-\infty}^z e^{-s^2/2} d s = \dfrac{1}{2}\left[1 +\mathrm{erf}\left(\dfrac{z}{\sqrt{2}}\right)\right].
\end{equation}
In particular, when $\mu=0$ and $\sigma = 1$, we have a standard quasi-Gaussian distribution, with density function
\begin{equation}
\label{eq:den}
p(\vel; 0, 1, K_0) = \left\{
\begin{array}{ll}
\mathcal{C}_0 e^{-\vel^2/2},& |\vel| \leq K_0\\
0, & \mathrm{other\ wise}.
\end{array}\right.
\end{equation}
where
\begin{equation}
\mathcal{C}_0 = \dfrac{1}{\sqrt{2\pi}\int_{-K_0}^{K_0} e^{-s^2/2} d s}.
\end{equation}
There is only one adjustable parameter, namely the bound $K_0$, in a standard quasi-Gaussian distribution.

We can now compare the distribution function obtained from our simulation data with the quasi-Gaussian distribution. To this end, we first normalized the signal sequence $\{v_n\}$. From Section \ref{sec:3.3}, let $\zeta_n = \vel_n/\sigma(\beta,\gamma)$ so the sequence $\{\zeta_n\}$ has mean $\mu = 0$, standard deviation $\sigma = 1$, and is bounded by
\begin{equation}
\label{eq:K}
K_0 = \dfrac{K(\beta,\gamma)}{\sigma(\beta,\gamma)} \simeq  \dfrac{\sqrt{\beta/\gamma}}{0.21 \sqrt{\beta/\gamma} + 0.19}.
\end{equation}
Equation \eqref{eq:K} gives the relation between the equation parameter $\beta/\gamma$ and the adjustable distribution parameter $K_0$, which is shown in Figure \ref{fig:fit}a. We have $K_0\simeq 5$ when $\beta/\gamma$ is large.

Figure \ref{fig:fit}b shows the result of fitting the density function Eq. \eqref{eq:den} with our simulation data.  We can see that \eqref{eq:den} provides a
reasonable fit for the simulation data when $\beta \geq 6$ (and $\gamma = 1$).  Thus, the simulation results indicate that when $\beta/\gamma$ is large, the density of the distribution of irregular solution trajectories of the differential delay equation \eqref{eq:1} can be approximated by a quasi-Gaussian distribution.

\begin{figure}[htbp]
\centering
\includegraphics[width=8cm]{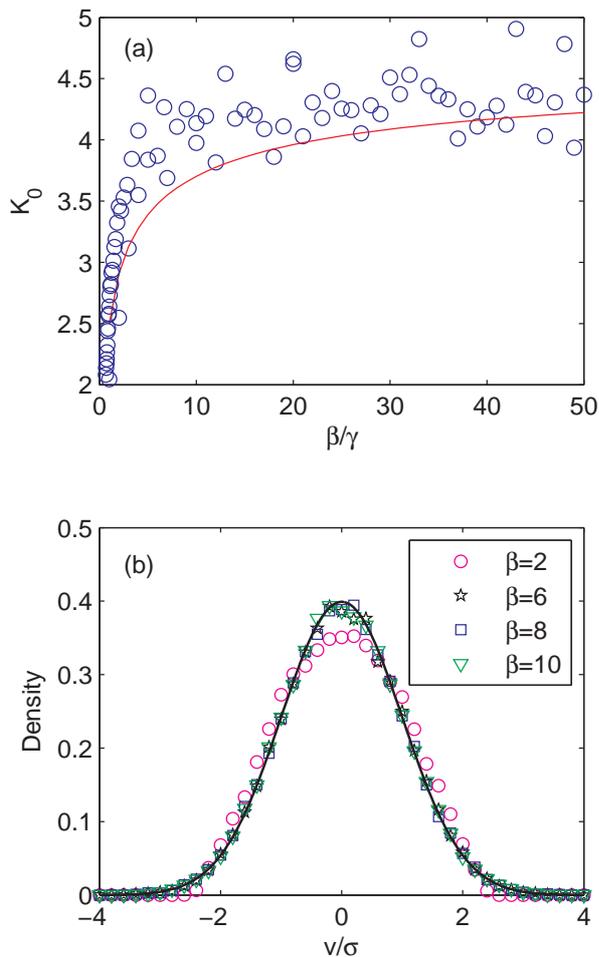}
\caption{(Color online) (a) The normalized bound $K$ as a function of $\beta/\gamma$. The solid curve shows the fit with $\sqrt{\beta/\gamma}/(0.21 \sqrt{\beta/\gamma} + 0.19)$. Circles are data from simulations in Sections \ref{sec:3.2} and \ref{sec:3.3}. (b) Density functions obtain from a solution of the differential delay equation with given value of $\beta$ (and with $\gamma = 1$). Solid curve shows the density function of quasi-Gaussian distribution according to Eq. \eqref{eq:den} and with $K_0$ obtained from $\beta = 10$.}
\label{fig:fit}
\end{figure}

\section{Deterministic Brownian Motion}
\label{sec:4}

Now, we will show that the differential delay equation \eqref{eq:bm}, in a suitable parameter region, can generate dynamics with many of the properties of Brownian motion in spite of the fact that the evolution equation is deterministic. Therefore, these dynamics are examples of \textit{deterministic Brownian motion}.

Consider solutions of the following deterministic system:
\begin{equation}
\label{eq:w0}
\left\{
\begin{array}{rcl}
\dfrac{d x}{dt}&=& \vel\\
\dfrac{d \vel}{dt} &=&  - \gamma \vel + \sin(2\pi\beta \vel(t-1))
\end{array}
\right.
\end{equation}
Here $x(t)$ measures the position of a particle with velocity $\vel(t)$. Figure \ref{fig:process}a shows sample solutions $x(t)$ of \eqref{eq:w0},  which are akin to the dynamics of a Brownian particle.  Figure \ref{fig:process}b shows the density functions of $\Delta x(t)$, the displacement from the particle initial position at different times   $t$. These numerical results show that at any time $t$, $\Delta x(t)$ has Gaussian like distribution.

\begin{figure}[htbp]
\centering
\includegraphics[width=8cm]{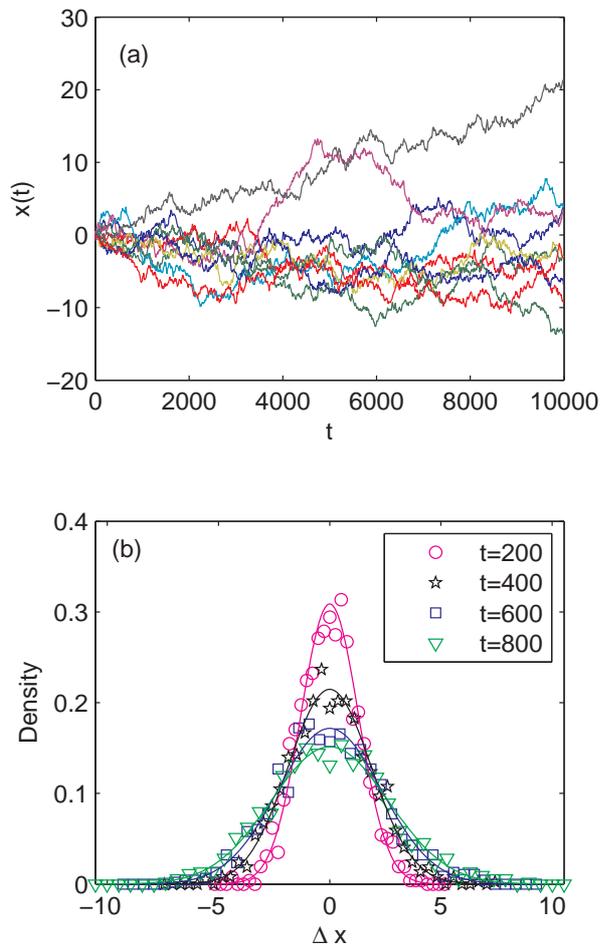}
\caption{(Color online) (a) The numerically produced deterministic Brownian motion $x(t)$.  Here $\beta=20, \gamma = 1$. Initial velocities are constants as in Eq. \eqref{eq:ini}. (b) Density functions of $\Delta x(t)$ ($= x(t)  - x(0)$) at different times $t$.  The symbols are taken from the numerical solutions, while the solid curves show the density function of the corresponding quasi-Gaussian distribution. }
\label{fig:process}
\end{figure}

Figure \ref{fig:MSD} shows $\langle [\Delta x(t)]^2\rangle$, the dependence of mean square displacement (MSD), as a function of $t$. In the simulations, we set $\beta = 20$, and chose different values of $\gamma$ (as shown in the figure panel). For each $\gamma$, the  MSD $\langle [\Delta x(t)]^2\rangle$ is obtained from $10^3$ independent trajectories, each with a randomly selected constant initial velocity. In Figure \ref{fig:MSD}, we normalized the results for different parameters through $D = \sigma^2/\gamma$, the ``diffusion constant''.  Our simulations show that $\langle [\Delta x(t)]^2 \rangle = 2 D t$ at long time scales, as predicted by Einstein't theory for Brownian motion \cite{Einstein:05}. At short time scales, we have $\langle [\Delta x(t)]^2\rangle = c D t^2$, where the pre-factor $c$ depends on the initial condition. This result agrees well with recent measurements of Brownian motion using an optical tweezer \cite{Raizen}.  In the interpretation of experimental data, it is a long debated question if and by what mens we can distinguish whether an observed irregular signal is deterministically chaotic or stochastic \cite{Cencini2000,Cencini2010}. Results in the current study indicate that experimentally observed data for Brownian motion can be reproduced by solutions of a deterministic differential delay equation over a wide range of time scales (six orders of magnitude).  Thus, differential delay equations provide an alternative way for reproducing  ``random'' signals.

\begin{figure}[htbp]
\centering
\includegraphics[width=8cm]{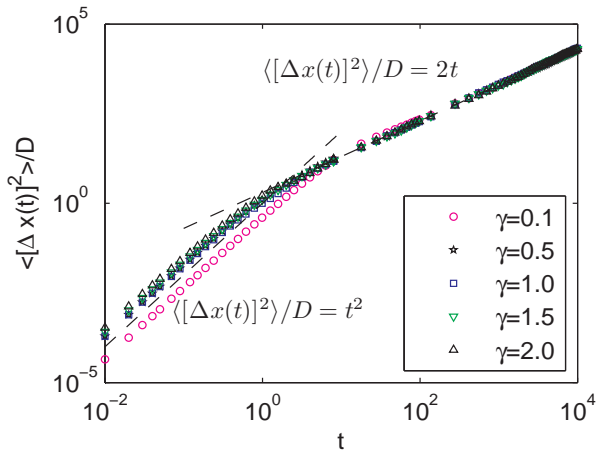}
\caption{(Color online) Mean square displacement (MSD) $\langle [\Delta x(t)]^2\rangle$ of deterministic Brownian motions. Here $D = \sigma^2/\gamma$, which is analogous to the diffusion constant of a Brownian particle in solution. The dashed lines show $\langle [\Delta x(t)]^2\rangle/D = t^2$ at short times,  and $\langle [\Delta x(t)]^2 \rangle = 2 t$ for a longer time scale respectively. Here $\beta=20$ and $\gamma$ are shown in the figure panel.}
\label{fig:MSD}
\end{figure}

\section{Conjectures}\label{conjectures}

The results that we have presented to this point are so intriguing that we are led to formulate a series of conjectures.  Though we believe these to be true, all efforts to prove them have proved fruitless to date.  We present them in the hope that others will find their proof a challenge that they are able to overcome.

In formulating these conjectures, we focus on the irregular solutions for large $\beta$, and therefore we  will always assume that
$\beta$ is such that equation \eqref{eq:1} has no stable steady state or stable periodic solution. In particular, according to Theorem
\ref{th:bif}, we will always assume $\gamma=1$ and $\beta\in I$, with $I$ defined by \eqref{eq:defI}.

Let $\vel_\beta(t;\phi)$   be the solution of
\begin{equation}
\left\{
\begin{array}{l}
\dfrac{d \vel}{dt} = - \vel + \sin(2\pi \beta \vel(t-1)),\\
\vel(t) = \phi(t),\quad -1\leq t\leq 0.
\end{array}\right.
\end{equation}
Define
\begin{eqnarray}
\mu(\beta; \phi) &=& \lim_{T\to\infty} \dfrac{1}{T}\int_0^T\vel_\beta(t;\phi) d t,\\
K(\beta;\phi) &=& \lim_{T\to \infty}\sup_{0\leq t\leq T} |\vel_\beta(t;\phi)|,\\
\sigma(\beta; \phi)&=& \lim_{T\to \infty}\sqrt{\dfrac{1}{T}\int_0^T \vel_\beta(t;\phi)^2 d t},\\
\mu_4(\beta;\phi)&=& \lim_{T\to \infty}\dfrac{1}{T}\int_0^T \vel_\beta(t;\phi)^4 d t.
\end{eqnarray}

In these conjectures, we always assume that the initial function $\phi(t)$ is taken such that the solution $\vel_\beta(t;\phi)$ is not a steady
state solution, i.e., $\phi$ satisfies the condition:
\begin{equation}
\label{eq:cond}
-\phi(t) + \sin(2\pi\beta \phi(t))\not\equiv 0,\quad -1\leq t\leq 0.
\end{equation}

We then have the following conjectures.

\begin{conjecture}
\label{conj:1}
For any $\phi\in C([-1,0], \mathbb{R})$ that satisfies \eqref{eq:cond}, we have
\begin{equation}
\label{eq:con1}
\mu (\beta; \phi) = 0
\end{equation}
for any $\beta\in I$.
\end{conjecture}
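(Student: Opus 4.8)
The plan rests on the odd symmetry of the velocity equation together with an ergodicity property that the numerics of Section~\ref{ss:stationary} strongly support.

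First I would record the symmetry. If $v(t)$ solves $\dot v = -v + \sin(2\pi\beta v(t-1))$, then $w(t)=-v(t)$ solves the same equation, since $\sin$ is odd: $\dot w = -(-v+\sin(2\pi\beta v(t-1))) = -w + \sin(2\pi\beta w(t-1))$. Thus the semiflow $\Phi_t$ on $C([-1,0],\mathbb{R})$ commutes with the reflection $S\phi=-\phi$, and the whole phase portrait is invariant under $v\mapsto -v$.

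Next I would extract the one exact relation available by direct integration, to make clear why symmetry is indispensable. Because $|\sin|\le 1$, comparison with $\dot v = -v\pm 1$ gives $\limsup_{t\to\infty}|v(t)|\le 1$, so every admissible solution is eventually bounded. Integrating the equation over $[0,T]$, dividing by $T$, and letting $T\to\infty$, the boundary term $(v(T)-v(0))/T$ vanishes, and the delay shift in $\int_0^T\sin(2\pi\beta v(t-1))\,dt=\int_{-1}^{T-1}\sin(2\pi\beta v(s))\,ds$ contributes only $O(1/T)$, yielding the identity
$$\mu(\beta;\phi)=\lim_{T\to\infty}\frac{1}{T}\int_0^T \sin(2\pi\beta v(t))\,dt.$$
This expresses the mean velocity as the time-average of the (odd) forcing, but both sides are still unknown; the identity alone cannot force $\mu=0$, exactly as it fails to do so at the nonzero steady states $\pm f_k(\beta)$, where it holds trivially.

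The decisive step is therefore to pass to an invariant measure. The absorbing ball from the bound above gives a compact global attractor supporting invariant probability measures; by the symmetry of the first step, if $\rho$ is invariant so is $S_*\rho$, and $\tfrac{1}{2}(\rho+S_*\rho)$ is a reflection-symmetric invariant measure. If one can show that for every $\phi$ meeting condition~\eqref{eq:cond} the Birkhoff time-averages converge to integration against a single \emph{physical} (SRB-type) measure $\rho_*$, then uniqueness forces $S_*\rho_*=\rho_*$, i.e. $\rho_*$ is even, and integrating the coordinate against it gives $\mu(\beta;\phi)=\int v\,d\rho_*=0$ (consistently, $\langle\sin(2\pi\beta v)\rangle=0$ as the integrand is then odd). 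The main obstacle is precisely this last step: proving existence, uniqueness, and hence reflection-symmetry of a physical measure for a non-uniformly hyperbolic, infinite-dimensional delay semiflow lies beyond current ergodic theory, which is why the statement remains a conjecture. The hypothesis~\eqref{eq:cond} enters only to discard trajectories that start exactly on a steady state; for $\beta\in I$ all steady states and periodic orbits are unstable by Theorem~\ref{th:bif}, so a generic $\phi$ lands on the symmetric chaotic attractor where the argument applies.
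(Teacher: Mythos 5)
There is nothing in the paper to compare your argument against: the statement you were given is Conjecture~\ref{conj:1}, which the authors explicitly state they have been \emph{unable} to prove (``all efforts to prove them have proved fruitless to date''). So the honest question is whether your proposal closes that gap, and it does not --- by your own admission. Your first three steps are sound and essentially reproduce what the paper already knows: the equivariance of the semiflow under $v\mapsto -v$ is exactly why the paper's Remark~1 reduces the conjecture to the irregular solutions, the dissipativity bound $\limsup_t|v(t)|\leq 1$ is standard, and the integration identity $\mu(\beta;\phi)=\lim_{T\to\infty}\frac{1}{T}\int_0^T\sin(2\pi\beta v(t))\,dt$ is correct but, as you note, vacuous on its own. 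The entire content of the conjecture is then loaded onto the hypothesis that every admissible trajectory's Birkhoff averages converge to a \emph{unique} physical (SRB-type) invariant measure, which you concede ``lies beyond current ergodic theory.'' A proof that reduces an open problem to a strictly harder open problem is a reformulation, not a proof; indeed the paper's own footnote warns that even the notion of ergodicity for this infinite-dimensional semiflow ``is not well defined and has resisted all attempts to do so.''

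There is also a secondary gap that would survive even if the SRB hypothesis were granted. Condition~\eqref{eq:cond} excludes only initial functions that \emph{are} steady states, whereas a physical measure controls at best almost-every or generic initial data. For $\beta\in I$ the nonzero equilibria $\pm f_k(\beta)$, $\pm g_k(\beta)$ are unstable but still possess nontrivial (finite-codimension, hence infinite-dimensional) stable manifolds; any $\phi$ on such a manifold satisfies \eqref{eq:cond} yet has $\mu(\beta;\phi)=\pm f_k(\beta)\neq 0$ or $\pm g_k(\beta)\neq 0$. Your closing sentence quietly retreats from ``any $\phi$'' to ``a generic $\phi$,'' but the conjecture as stated quantifies over all $\phi$ satisfying \eqref{eq:cond}, so your strategy would at best prove a weakened, generic version --- or else it shows the conjecture needs these exceptional sets excluded. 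If you pursue this line, you should either characterize and exclude the stable sets of the unstable invariant sets explicitly, or state clearly that the target of the argument is the almost-sure statement rather than the pointwise one.
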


\textit{Remark 1} When $\beta \not\in I$, we have to exclude the cases in which $\vel_\beta(t;\phi)$ converges to either a stable steady state or a stable periodic solution arising through a Hopf bifurcation. Thus, with this exclusion, the solution $\vel_\beta(t;\phi)$ will converge to either a periodic solution which is symmetric about $0$, or an irregular solution. Equation \eqref{eq:con1} always holds for a symmetric solution. Therefore, to prove Conjecture \ref{conj:1} we only need to show that \eqref{eq:con1} is satisfied by the irregular solutions for any $\beta>0$.

\begin{conjecture}
\label{conj:2}For any $\phi\in C([-1,0],\mathbb{R})$ satisfying \eqref{eq:cond}, the limit
\begin{equation}
\label{eq:con2}
\lim_{\beta\in I, \beta\to\infty} \beta^{1/2}K(\beta;\phi)
\end{equation}
exists, independent of $\phi$, and is positive.
\end{conjecture}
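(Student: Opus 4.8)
The plan is to reduce Conjecture \ref{conj:2} to a sharp two-sided bound $c\,\beta^{-1/2}\le K(\beta;\phi)\le C\,\beta^{-1/2}$ together with an identification of the limiting constant, all driven by a single scaling mechanism. Writing the solution through variation of constants,
\[
\vel(t)=e^{-(t-t_0)}\vel(t_0)+\int_{t_0}^{t}e^{-(t-s)}\sin\!\bigl(2\pi\beta\,\vel(s-1)\bigr)\,ds ,
\]
the asymptotic bound $K$ is governed by the supremum of the oscillatory integral on the right. The point is that the forcing $\sin(2\pi\beta\vel(s-1))$ is a rapidly oscillating function of $s$: its phase advances at rate $2\pi\beta\,\vel'(s-1)$, and once $\vel$ is of order $\beta^{-1/2}$ one checks that $\vel'$ is of order $1$, so the phase sweeps at rate $\sim\beta$ and the forcing decorrelates on a time scale $\tau_c\sim\beta^{-1}$. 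Treating the forcing as an effective colored noise of order-one variance and correlation time $\tau_c$, the Ornstein--Uhlenbeck balance $\dot\vel=-\vel+\xi$ gives a stationary variance $\sigma^2\sim\tfrac12\langle\xi^2\rangle\tau_c\sim\beta^{-1}$, i.e. $\sigma\sim\beta^{-1/2}$; and $K$, being a fixed multiple of $\sigma$ for the limiting quasi-Gaussian law of Section \ref{ss:quasigaussian}, scales the same way. This self-consistent scaling is the engine behind every part of the argument.

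I would make this rigorous with two complementary estimates. For the size of typical fluctuations, the stationary second-moment identity --- obtained by multiplying the equation by $\vel$ and time-averaging, so that the $\vel\dot\vel$ term drops out --- reads $\sigma^2=\lim_{T\to\infty}\tfrac1T\int_0^T \vel(t)\sin(2\pi\beta\vel(t-1))\,dt$, whence the problem reduces to showing that this response--forcing correlation is of order $\beta^{-1}$, bounded both above and below. For the sup $K$ itself I would estimate the oscillatory integral directly: integration by parts against the phase gains a factor $\beta^{-1}$ away from the turning points of $\vel(\cdot-1)$, while near each turning point (where $\vel'=0$) a stationary-phase contribution of size $\sim(\beta\,|\vel''|)^{-1/2}$ accounts for the sharp edge of the support. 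Summing the turning-point contributions over a window of length $O(1)$, using that the chaotic phases add incoherently, should reproduce the $\beta^{-1/2}$ law. A lower bound $K\ge c\,\beta^{-1/2}$ would follow from part (3) of Theorem \ref{th:bif} --- the instability of $\vel^*=0$ --- upgraded to a quantitative non-collapse statement and combined with condition \eqref{eq:cond} to exclude the steady state.

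To extract the limit I would pass to the rescaled variable $w=\beta^{1/2}\vel$ and show that the effective forcing correlation, after rescaling time by $\beta$, converges to a definite constant, so that $\beta^{1/2}\sigma\to$ const and $\beta^{1/2}K=(K/\sigma)\,\beta^{1/2}\sigma\to K_0\cdot\lim\beta^{1/2}\sigma>0$, independent of $\phi$. The main obstacle, and the reason the statement remains a conjecture, is twofold. The fluctuation estimates are genuinely circular: both the second-moment identity and the stationary-phase bound require a priori control of $\|\vel'\|_\infty$, of $|\vel''|$ at the turning points, and of the spacing and number of turning points --- exactly the regularity and oscillation structure of the chaotic solution that one is trying to quantify --- and closing this loop for a genuine delay equation, rather than for a prescribed oscillatory phase, is the crux. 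Moreover, the claims that the limit \eqref{eq:con2} exists and is \emph{independent of $\phi$} amount to the existence and uniqueness of a physical invariant measure for an infinite-dimensional chaotic semiflow together with a corresponding ergodic theorem --- the very ergodicity that, as remarked after the discussion of Figure \ref{fig:density}, is not even well defined for these differential delay equations. I expect the scaling bounds to be attainable with substantial technical effort, but the $\phi$-independence to remain the decisive difficulty.
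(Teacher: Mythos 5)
There is nothing in the paper to compare your proposal against: the statement is Conjecture~\ref{conj:2}, one of five assertions that the authors explicitly present in Section~\ref{conjectures} as unproven (``all efforts to prove them have proved fruitless to date''), supported only by the numerics of Sections~\ref{sec:3.2}--\ref{ss:quasigaussian}. So any submission addressing it must be judged as a would-be new proof, and yours, by its own candid admission, is not one --- it is a heuristic scaling narrative plus a program, and the program's two load-bearing steps are exactly the ones you concede you cannot close.

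To name the gaps concretely: (i) the engine of the argument --- treating $\sin(2\pi\beta\,\vel(t-1))$ as an effective noise of order-one variance with correlation time $\sim\beta^{-1}$, then invoking an Ornstein--Uhlenbeck balance --- assumes precisely the statistical behavior that the whole family of conjectures is about; no mechanism is offered for \emph{proving} decorrelation for a deterministic delay equation, and the paper's footnote on ergodicity points out that even the framework for such a statement (an invariant measure and an ergodic theorem for this infinite-dimensional semiflow) is not available. (ii) The quantitative estimates are circular, as you say yourself: both the second-moment identity and the stationary-phase bound require a priori control of $\vel'$, $\vel''$ and the spacing of turning points of the chaotic solution, which is the unknown. (iii) The proposed lower bound does not follow from what you cite: point (3) of Theorem~\ref{th:bif} is a linear instability statement about $\vel^*=0$ and carries no amplitude scale, so it cannot by itself force $K\geq c\,\beta^{-1/2}$; and ``chaotic phases add incoherently'' is again a probabilistic postulate inserted into a deterministic problem. (iv) Finally, your extraction of the limit writes $\beta^{1/2}K=(K/\sigma)\,\beta^{1/2}\sigma\to K_0\cdot\lim\beta^{1/2}\sigma$, which deduces Conjecture~\ref{conj:2} from Conjecture~\ref{conj:3} together with the quasi-Gaussian limit of Conjecture~\ref{conj:5} (via \eqref{eq:K0-1}); deriving one open conjecture from two others is not progress on existence of the limit \eqref{eq:con2}, let alone on its independence of $\phi$, which is the part you correctly identify as the decisive difficulty. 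The scaling picture is consistent with the paper's data (Figures~\ref{fig:parb} and~\ref{fig:fit}), but nothing in the proposal converts that consistency into a proof.
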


\begin{conjecture}
\label{conj:3}
For any $\phi\in C([-1,0],\mathbb{R})$ satisfying \eqref{eq:cond}, the limit
\begin{equation}
\label{eq:con3}
\lim_{\beta\in I, \beta\to\infty} \beta^{1/2}\sigma(\beta;\phi)
\end{equation}
exists, independent of $\phi$, and is positive.
\end{conjecture}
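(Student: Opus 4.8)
The plan is to extract the $\beta^{-1/2}$ scaling by a change of variables and then reduce Conjecture \ref{conj:3} to a self-consistent statement about the stationary second moment of a rescaled equation. First I would set $w = \beta^{1/2}\vel$ and $\lambda = \beta^{1/2}$, which turns \eqref{eq:1} (with $\gamma=1$) into
\[
\frac{dw}{dt} = -w + \lambda\sin\bigl(2\pi\lambda\, w(t-1)\bigr),
\]
so that $\beta^{1/2}\sigma(\beta;\phi)$ is exactly the root-mean-square of $w$. The target is then to show that the time-averaged second moment of $w$ converges to a positive, $\phi$-independent constant as $\lambda\to\infty$. A clean a priori bound is available here: since $|\sin|\le 1$, every solution of \eqref{eq:1} satisfies $\limsup_{t\to\infty}|\vel(t)|\le 1/\gamma = 1$, so after rescaling $w$ remains in a fixed compact set. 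I would use this to guarantee tightness of the empirical occupation measures and hence, via Krylov--Bogolyubov on the semiflow over $C([-1,0],\mathbb{R})$, the existence of an invariant measure whose second moment represents $\sigma^2$.

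Second I would write the stationary solution in its mild (variation-of-constants) form,
\[
\vel(t) = \int_{-\infty}^{t} e^{-(t-s)}\,\sin\bigl(2\pi\beta\,\vel(s-1)\bigr)\,ds,
\]
and express $\sigma^2$ as a double integral of the forcing autocorrelation $R(\tau) = \langle \sin(2\pi\beta\,\vel(s-1))\,\sin(2\pi\beta\,\vel(s+\tau-1))\rangle$ against the kernel $e^{-|\tau|}$. The decisive analytic input is that for large $\beta$ the forcing becomes delta-correlated: under the quasi-Gaussian law of Section \ref{ss:quasigaussian} and equidistribution of the phase $2\pi\beta\vel \bmod 2\pi$ one finds $R(0)\to \tfrac12$, while $R(\tau)$ collapses on a time scale of order $\beta^{-1}$ determined by $\langle(\vel(s)-\vel(s+\tau))^2\rangle \sim \sigma^2|\tau|$ through the exponential velocity correlation of Section \ref{ss:correlation}. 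The integrated intensity $\int R(\tau)\,d\tau$ then scales as $\beta^{-1}$, and balancing it against the exponential damping yields a closed fixed-point equation for $L = \lim_{\beta\to\infty}\beta^{1/2}\sigma$ whose unique solution is positive; independence of $\phi$ follows because this equation retains no trace of the initial data. Positivity can be argued more directly as well: the root-mean-square vanishes only if $\vel\equiv 0$, and condition \eqref{eq:cond} together with the instability of $\vel^*=0$ for $\beta>a_0$ (Theorem \ref{th:bif}(3)) rules out decay to that steady state.

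Granting the closure, convergence and positivity are comparatively routine, so the main obstacle is justifying the closure and the ergodicity it presupposes. I would need (i) existence and, crucially, uniqueness of the invariant measure so that the limits defining $\sigma(\beta;\phi)$ both exist and are $\phi$-independent --- precisely the ergodicity question the authors note is not even well-posed for this infinite-dimensional semiflow; and (ii) a rigorous form of the quasi-Gaussian / equidistribution property, which is itself only conjectural and supplies the decorrelation estimate for the highly oscillatory forcing. A further delicate point is the interchange of the limits $T\to\infty$ and $\beta\to\infty$, which I would try to control by proving decay of $R(\tau)$ \emph{uniformly} in $\beta$ rather than merely asymptotically. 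I expect step (ii) to be the crux: converting the numerically observed near-Gaussianity into a genuine decorrelation bound is the very central-limit mechanism that makes these deterministic solutions behave like noise, and it is the same difficulty that obstructs Conjectures \ref{conj:1} and \ref{conj:2}.
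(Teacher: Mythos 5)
You should first be aware that the paper contains \emph{no proof} of this statement: it is one of the five conjectures of Section \ref{conjectures} that the authors explicitly say they have been unable to prove, so there is no argument of theirs to compare yours against. Your proposal must therefore stand on its own, and it does not: as you yourself concede, it is a conditional reduction to statements at least as hard as the conjecture itself --- uniqueness/ergodicity of an invariant measure for the infinite-dimensional semiflow (which the authors note is not even well-posed), and a rigorous quasi-Gaussian/equidistribution property, i.e.\ essentially Conjectures \ref{conj:4} and \ref{conj:5}. Even carried out carefully, that is a research program, not a proof.

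Beyond the acknowledged conditionality there are two concrete errors. First, your compactness claim is false as stated: the elementary bound $\limsup_{t\to\infty}|\vel(t)|\le 1/\gamma=1$ gives only $|w|=\beta^{1/2}|\vel|\le\beta^{1/2}$, so the rescaled variable is \emph{not} confined to a fixed compact set uniformly in $\beta$; the assertion that it is, namely $K(\beta;\phi)=O(\beta^{-1/2})$, is precisely Conjecture \ref{conj:2}, which you may not assume here. Hence the tightness/Krylov--Bogolyubov step already presupposes an unproven conjecture rather than an a priori estimate. Second, your positivity argument proves the wrong thing: instability of $\vel^*=0$ (Theorem \ref{th:bif}, point (3)) together with \eqref{eq:cond} can at best show $\sigma(\beta;\phi)>0$ for each fixed $\beta$, whereas positivity of the limit \eqref{eq:con3} requires a lower bound $\sigma(\beta;\phi)\ge c\,\beta^{-1/2}$ uniform in large $\beta\in I$; with only pointwise positivity, $\beta^{1/2}\sigma(\beta;\phi)$ could still tend to $0$. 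Relatedly, the ``closed fixed-point equation'' for $L=\lim\beta^{1/2}\sigma$ is never actually exhibited: the scaling heuristic $R(0)\to\tfrac12$, $\int R(\tau)\,d\tau\sim\beta^{-1}$ is a consistency check of the $\beta^{-1/2}$ scaling (the same one behind the numerical fits of Section \ref{sec:3.2}), not an equation whose unique solvability would deliver existence of the limit or its independence of $\phi$. The gaps you flag are thus real, but they are not peripheral technicalities to be supplied later --- they constitute the entire content of the conjecture.
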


\begin{conjecture}
\label{conj:4}For any $\phi\in C([-1,0],\mathbb{R})$ satisfying  \eqref{eq:cond}, we have
\begin{equation}
\label{eq:con3-1}
\lim_{\beta\in I, \beta\to\infty} \dfrac{\mu_4(\beta;\phi)}{\sigma^4(\beta;\phi)} = 3.
\end{equation}
\end{conjecture}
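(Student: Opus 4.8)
The statement to prove is that the normalized fourth moment $\mu_4(\beta;\phi)/\sigma^4(\beta;\phi)$ tends to $3$, i.e. that the stationary law of the chaotic velocity acquires zero excess kurtosis, becoming Gaussian, as $\beta\to\infty$. The plan is to read this as a central-limit statement and to prove it by exhibiting the velocity as an exponentially weighted time average of a rapidly decorrelating deterministic ``forcing''. Assuming Conjecture \ref{conj:1}, the process has zero mean, so $\sigma^2$ and $\mu_4$ are genuine second and fourth central moments and the target \eqref{eq:con3-1} is precisely the Gaussian Wick identity $\mu_4=3\sigma^4$. First I would write, in the stationary regime, $\vel_\beta(t)=\int_{-\infty}^{t}e^{-(t-s)}\eta_\beta(s)\,ds$ with $\eta_\beta(s)=\sin(2\pi\beta\,\vel_\beta(s-1))$; this is just the variation-of-constants representation of \eqref{eq:1}, and it exhibits $\vel_\beta$ as a linear (Ornstein--Uhlenbeck-type) filter, with $O(1)$ memory, of the bounded signal $\eta_\beta$.

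The key observation is that as $\beta\to\infty$ the phase $2\pi\beta\,\vel_\beta(s-1)$ sweeps through many periods while $\vel_\beta$ itself stays of order $\beta^{-1/2}$ (Conjectures \ref{conj:2}--\ref{conj:3}), so $\eta_\beta$ is a zero-mean, order-one bounded signal whose correlation time shrinks to zero like $\beta^{-1}$. I would make this quantitative by showing that the integrated autocorrelation $A_\beta=\int_{-\infty}^{\infty}\langle\eta_\beta(0)\eta_\beta(r)\rangle\,dr$ satisfies $A_\beta\sim c/\beta$; the filter identity then already reproduces $\sigma^2=A_\beta/2\sim c/(2\beta)$, matching Section \ref{sec:3.2}. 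Rescaling $w=\beta^{1/2}\vel_\beta$, the drive $\beta^{1/2}\eta_\beta$ behaves like white noise of order-one intensity, and the program is to prove a functional (Donsker-type) invariance principle $\beta^{1/2}\!\int_0^{t}\eta_\beta(s)\,ds\Rightarrow\sqrt{c}\,B_t$. Granting this, $w$ converges in law to the stationary Ornstein--Uhlenbeck process, whose one-point marginal is exactly $N(0,c/2)$. Because $|w|\le\beta^{1/2}K(\beta)$ is uniformly bounded by Conjecture \ref{conj:2}, all moments converge with the distribution, and since $\mu_4/\sigma^4$ is scale invariant one obtains $\mu_4/\sigma^4\to 3$.

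A more hands-on alternative, avoiding a full process-level limit, is to work directly with the moment hierarchy generated by the filter representation. Expanding gives $\sigma^2=\int\!\int e^{-(2t-s-s')}\langle\eta_\beta(s)\eta_\beta(s')\rangle$ and $\mu_4=\int\!\!\int\!\!\int\!\!\int e^{-\sum(t-s_i)}\langle\eta_\beta(s_1)\cdots\eta_\beta(s_4)\rangle$. Decomposing the four-point function into its three pairwise products plus the connected (fourth-cumulant) part, the pairings reproduce $3\sigma^4$ exactly, so the conjecture reduces to showing that the connected four-point contribution is $o(\sigma^4)=o(\beta^{-2})$. This is the cumulant form of the central limit theorem, and it would follow from a summable decay-of-correlations estimate on $\eta_\beta$, namely $|\langle\eta_\beta;\eta_\beta;\eta_\beta;\eta_\beta\rangle_c|$ bounded by products of the two-point decays, integrated against the exponential kernels.

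The decisive difficulty is the step at which mixing is invoked: both routes require quantitative decay of correlations for the stationary measure of the delay equation, yet, as the authors themselves emphasize, even the existence and uniqueness of a physical invariant measure for this infinite-dimensional semiflow --- let alone a mixing rate --- is not presently available, and the notion of ergodicity here is ill defined. Constructing an SRB-like measure and a spectral gap for the transfer operator of the chaotic DDE is, I expect, the true bottleneck. A second, subtler obstruction is internal to the conjectures: Conjectures \ref{conj:2} and \ref{conj:3} together force the normalized bound $K_0=K/\sigma$ to a finite limit, so the rescaled process $w$ is compactly supported and its weak limit cannot literally be the (unbounded) Gaussian. One must therefore show in addition that the boundary correction induced by this truncation contributes $o(\sigma^4)$ to $\mu_4$, so that the fourth-moment ratio nonetheless converges to $3$. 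Controlling this truncation term simultaneously with the mixing estimate is, in my view, exactly the delicate point that keeps the statement at the level of a conjecture.
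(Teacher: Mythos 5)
There is no paper proof to compare against: the statement you were asked to prove is Conjecture~\ref{conj:4}, one of the five conjectures of Section~\ref{conjectures} that the authors explicitly state they have been unable to prove. So the only question is whether your argument closes the conjecture, and it does not. Both of your routes are reductions to statements that are themselves open: you condition on Conjectures~\ref{conj:1}--\ref{conj:3}, and the decisive steps --- a Donsker-type invariance principle for $\beta^{1/2}\int_0^t\eta_\beta(s)\,ds$, or summable bounds on the connected four-point function of $\eta_\beta$ --- require quantitative decay of correlations (equivalently an invariant measure with a spectral gap) for the stationary statistics of the delay semiflow, which is precisely what is unavailable; the paper's own footnote stresses that even ergodicity is not well defined for this infinite-dimensional system. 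There is also a structural point your filter picture glosses over: $\eta_\beta(s)=\sin(2\pi\beta\,\vel_\beta(s-1))$ is not an external drive but a functional of the unknown solution, so its two- and four-point correlations cannot be treated as given inputs to the moment hierarchy --- they are part of exactly what must be constructed. You flag most of this honestly, but flagged gaps are still gaps: what you have is a plausible program (a cumulant-form central limit heuristic consistent with the paper's numerics), not a proof.

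Beyond the missing mixing estimates, two steps are internally inconsistent as written. First, the Ornstein--Uhlenbeck limit of your first route cannot hold literally: by the very Conjecture~\ref{conj:2} you invoke, $w=\beta^{1/2}\vel_\beta$ is uniformly bounded by $\beta^{1/2}K(\beta;\phi)$, which tends to a finite constant, and a uniformly bounded family cannot converge in law to a process with Gaussian marginals (weak limits of uniformly bounded variables are supported in the same compact set). You notice this at the end, but the proposed repair --- show the truncation contributes $o(\sigma^4)$ to $\mu_4$ --- collides with the paper's own Conjecture~\ref{conj:5}: if $\vel_\beta/\sigma(\beta;\phi)$ converges in law to the quasi-Gaussian density \eqref{eq:den} with the finite $K_0$ of \eqref{eq:K0-1} (numerically $K_0\simeq 5$), then uniform boundedness forces convergence of all moments, and the fourth-moment ratio of that truncated Gaussian is a \emph{fixed} constant strictly below $3$ (deficient by roughly $(K_0^3+3K_0)e^{-K_0^2/2}$, i.e.\ of order $10^{-4}$). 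In other words the truncation correction is $\Theta(\sigma^4)$, not $o(\sigma^4)$, in the scenario you set up: Conjecture~\ref{conj:4} (limit exactly $3$) and Conjecture~\ref{conj:5} (limit law exactly a truncated Gaussian with finite $K_0$) cannot both hold as stated, and any proof along your lines must first decide which of the two it is actually proving. This tension is latent in the paper's conjectures rather than of your making, but your program inherits it at precisely the step you call ``the delicate point,'' and it blocks the conclusion $\mu_4/\sigma^4\to 3$ even if every mixing hypothesis were granted.
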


From Conjectures \ref{conj:2} and \ref{conj:3}, the constant
\begin{equation}
\label{eq:K0-1}
K_0 = \lim_{\beta\in I, \beta \to\infty} \dfrac{K(\beta; \phi)}{\sigma(\beta;\phi)}
\end{equation}
is well defined for any $\phi\in C([-1,0],\mathbb{R})$ satisfying  \eqref{eq:cond}, and independent of $\phi$.  Conjecture \ref{conj:4} suggests that when $\beta\in I$ is sufficiently large, the density of the distribution of the time series $\vel_\beta(t;\phi)$ tends to a Gaussian with mean $\mu = 0$, and standard deviation
$\sigma(\beta; \phi)$, but is truncated at $\pm K(\beta;\phi)$.  Therefore, let
\begin{equation}
P_\beta(z;\phi) = \lim_{T\to\infty}\dfrac{1}{T}\int_0^T H\left(z \sigma(\beta;\phi)- \vel_\beta(t;\phi)\right) d t
\end{equation}
where $H(\cdot)$ is the Heaviside step function. Then $P_\beta(z;\phi)$ measures the probability that $\vel_\beta(t;\phi) < z \sigma (\beta;\phi)$.

\begin{conjecture}
\label{conj:5}
Let $K_0$ be given by \eqref{eq:K0-1} and $p(\vel;0,1,K_0)$ be defined by \eqref{eq:den}. For any $\phi\in C([-1,0],\mathbb{R})$ satisfying \eqref{eq:cond}, we have
\begin{equation}
\lim_{\beta\in I, \beta\to\infty}P_\beta(z;\phi)  = \int_{-\infty}^z p(\vel;0,1,K_0) d \vel
\end{equation}
for all $z\in \mathbb{R}$.
\end{conjecture}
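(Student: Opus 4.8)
The plan is to prove Conjecture \ref{conj:5} by the \emph{method of moments}, exploiting that the candidate limit $p(\vel;0,1,K_0)$ is compactly supported and hence uniquely determined by its moments. For $\beta\in I$ and an admissible $\phi$, write $\zeta_\beta(t;\phi)=\vel_\beta(t;\phi)/\sigma(\beta;\phi)$ and let $\nu_\beta^\phi$ be its time-averaged (empirical) distribution, so that $P_\beta(z;\phi)$ is the distribution function of $\nu_\beta^\phi$. By Conjectures \ref{conj:2} and \ref{conj:3}, $\sup_t|\zeta_\beta(t;\phi)|=K(\beta;\phi)/\sigma(\beta;\phi)\to K_0$, so for all large $\beta$ the measures $\nu_\beta^\phi$ are supported in one fixed compact interval. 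This uniform compact support gives tightness for free, forces every moment to be finite, and — crucially — makes moment convergence \emph{equivalent} to weak convergence here, with no Carleman-type side condition. Since the target distribution function is continuous, weak convergence upgrades to the pointwise statement $P_\beta(z;\phi)\to\int_{-\infty}^z p(\vel;0,1,K_0)\,d\vel$ asserted in the conjecture. Thus everything reduces to showing that the $j$-th moment $m_j(\beta;\phi)=\int \zeta^j\,d\nu_\beta^\phi$ converges, as $\beta\to\infty$ along $I$, to the $j$-th moment $M_j(K_0)$ of the standard quasi-Gaussian, for every $j\in\mathbb{N}$.

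The low-order moments are already supplied: Conjecture \ref{conj:1} gives $m_1\to0$, the normalization forces $m_2\equiv1$, and Conjecture \ref{conj:4} gives $m_4\to3$, all matching the quasi-Gaussian values up to the exponentially small corrections produced by the truncation at $\pm K_0$. (That same truncation means $p(\vel;0,1,K_0)$ has variance strictly below $1$; a fully rigorous statement would fix the scale of the limit self-consistently from the normalization, a discrepancy of order $e^{-K_0^2/2}$ that is numerically invisible for the relevant $K_0\approx5$.) The \emph{odd} moments can be handled structurally: the vector field in \eqref{eq:1} is odd in $\vel$ since $\sin$ is odd, so $\vel\mapsto-\vel$ carries solutions to solutions; if one grants that the irregular dynamics carries a \emph{unique} invariant distribution along trajectories, that distribution is invariant under $\vel\mapsto-\vel$ and every odd moment vanishes, matching $M_{2\ell+1}(K_0)=0$. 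What remains is the heart of the matter: convergence of the even moments $m_{2\ell}(\beta;\phi)\to M_{2\ell}(K_0)$ for all $\ell\ge3$, i.e. a full central-limit statement, not merely the fourth-moment control of Conjecture \ref{conj:4}.

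For the even moments I would build on the integral representation
\begin{equation}
\vel_\beta(t)=\vel_\beta(t_0)e^{-(t-t_0)}+\int_{t_0}^{t}e^{-(t-s)}\sin\!\big(2\pi\beta\,\vel_\beta(s-1)\big)\,ds,
\end{equation}
which exhibits $\vel$ as an Ornstein--Uhlenbeck-type running average of the forcing $\eta(s)=\sin(2\pi\beta\,\vel(s-1))$. The governing heuristic is an averaging/homogenization picture: as $\beta\to\infty$ the map $\vel\mapsto\sin(2\pi\beta\vel)$ oscillates so rapidly that $\eta$ equidistributes (an arcsine-type marginal on $[-1,1]$, mean zero) and, because $\vel$ need move only $O(\beta^{-1})$ to sweep a full period, $\eta$ decorrelates on a time scale $\sim\beta^{-1}$, short compared with the $O(1)$ relaxation of the linear part. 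A central limit theorem for this fast, short-correlation forcing integrated against the exponential kernel then yields a Gaussian marginal with $\sigma^2\sim\beta^{-1}$, consistent with Conjecture \ref{conj:3}, while the truncation at a \emph{finite} $K_0$ reflects a large-deviation bound: pushing $|\vel|$ near its extreme value would require $\eta$ to stay coherently near $\pm1$ over an $O(1)$ window, i.e. $\vel(s-1)$ to sit near a steady state, which for $\beta\in I$ is unstable, so such excursions are exponentially suppressed at a sharp rate that sets $K_0$. The exponential decay $C(r)\simeq e^{-r/t_0}$, $t_0\simeq1/\gamma$, established in Section \ref{ss:correlation} is exactly the second-order input such a CLT needs.

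The hard part — and the reason this remains a conjecture — is making the averaging/CLT argument rigorous, which needs two ingredients currently out of reach. First, the quantities $m_j(\beta;\phi)$, and indeed $P_\beta(z;\phi)$ itself, are defined as long-time averages, and their very existence and independence of $\phi$ presuppose an ergodic theory for the chaotic attractor of an \emph{infinite-dimensional} semiflow, a notion that (as the footnote on ergodicity stresses) has no satisfactory formulation here; without it neither the uniqueness of the invariant distribution used for the odd moments nor the convergence of the time averages is justified. Second, even granting ergodicity, upgrading decay of the two-point function to the \emph{clustering of all higher-order correlations} that forces every even moment to its Gaussian value is precisely a spectral-gap statement for the Frobenius--Perron operator on the history space $C([-1,0],\mathbb{R})$; the absence of a reference (Lebesgue) measure and the infinite dimensionality of the state space obstruct every standard route to such a gap. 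A realistic program is therefore to first establish the foundational ergodicity and a quantitative mixing rate for \eqref{eq:1} at large $\beta$, and only then run the moment/CLT machinery above — which is why the paper isolates exactly these difficulties as Conjectures \ref{conj:1}--\ref{conj:4}, of which Conjecture \ref{conj:5} is the natural culmination.
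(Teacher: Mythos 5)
There is no proof in the paper to compare against: the statement is Conjecture \ref{conj:5}, one of the five statements the authors explicitly say they have been unable to prove, and your submission is---by its own admission---not a proof either. What you give is a conditional reduction plus a heuristic program. The reduction itself is sound as far as it goes: measures supported in a common compact interval are moment-determinate, so moment convergence is equivalent to weak convergence, and continuity of the limit CDF upgrades weak convergence to the pointwise statement of the conjecture. But everything that would turn this skeleton into a proof is absent. First, your inputs are Conjectures \ref{conj:1}--\ref{conj:4}, which are themselves open, so even a completed argument would only be conditional. Second, those conjectures control only the moments of order $1$, $2$, $4$ and the bound $K$; the convergence $m_{2\ell}(\beta;\phi)\to M_{2\ell}(K_0)$ for all $\ell\geq 3$, which is the actual analytic content of a central-limit statement, is handled only by an averaging/decorrelation heuristic. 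You yourself concede that the two ingredients needed to make it rigorous---an ergodic theory for the infinite-dimensional delay semiflow (existence of the time averages $P_\beta$ and $m_j$, uniqueness and $\vel\mapsto-\vel$ symmetry of the invariant distribution used for the odd moments) and clustering of higher-order correlations (a spectral-gap statement for the transfer operator on $C([-1,0],\mathbb{R})$)---are out of reach. Identifying the obstructions clearly, as you do, is not the same as overcoming them; the gap is the entire core of the argument, which is precisely why the paper lists this as a conjecture.

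One remark in your write-up is genuinely valuable and goes beyond the paper: the conjecture as literally stated is in (exponentially small) tension with Conjectures \ref{conj:2} and \ref{conj:3}. The normalized trajectory $\zeta_\beta=\vel_\beta/\sigma(\beta;\phi)$ has time-averaged second moment exactly $1$ by construction, whereas the density $p(\vel;0,1,K_0)$ of \eqref{eq:den}, being a Gaussian truncated at the finite value $K_0\simeq 5$ supplied by \eqref{eq:K0-1}, has second moment strictly less than $1$. Since the empirical measures are eventually supported in a fixed compact interval, the convergence asserted in Conjecture \ref{conj:5} would force their second moments to converge to that of the limit, which is impossible; so a fully correct formulation must rescale the limiting density self-consistently, a discrepancy of order $e^{-K_0^2/2}$ that is numerically invisible but logically real. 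This is a correct and useful observation about the \emph{statement}; it reinforces, rather than repairs, the fact that no proof has been produced.
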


Conjecture \ref{conj:5} can be thought of as a Central Limit Theorem result for the irregular solutions of the differential delay equation
\eqref{eq:1}.

These conjectures were based on our numerical studies of the differential delay equation
\begin{equation}
\label{eq:non}
\dfrac{d v}{dt} = - v + F(v(t-1))
\end{equation}
with the nonlinear function $F(v)$ taken to be a sinusoidal function. We suspect that the same results   also hold for any bounded and oscillating nonlinear function such that the solution is ``chaotic''. For example, Figure \ref{fig:fH} shows the numerical results for the step function nonlinearity
\begin{equation}
\label{eq:step}
F(v) = 2 \left[H(\sin(2\pi \beta v)) - \dfrac{1}{2}\right]
\end{equation}
and the quasi-periodic nonlinearity
\begin{equation}
\label{eq:quasi}
F(v) = \dfrac{1}{2}\left[\sin (2\pi \beta v) + \sin (2\sqrt{2}\pi \beta v)\right],
\end{equation}
respectively. Therefore, the proposed conjectures  may be universal for these deterministic ``chaotic'' dynamics.

\begin{figure}[htbp]
\centering
\includegraphics[width=8cm]{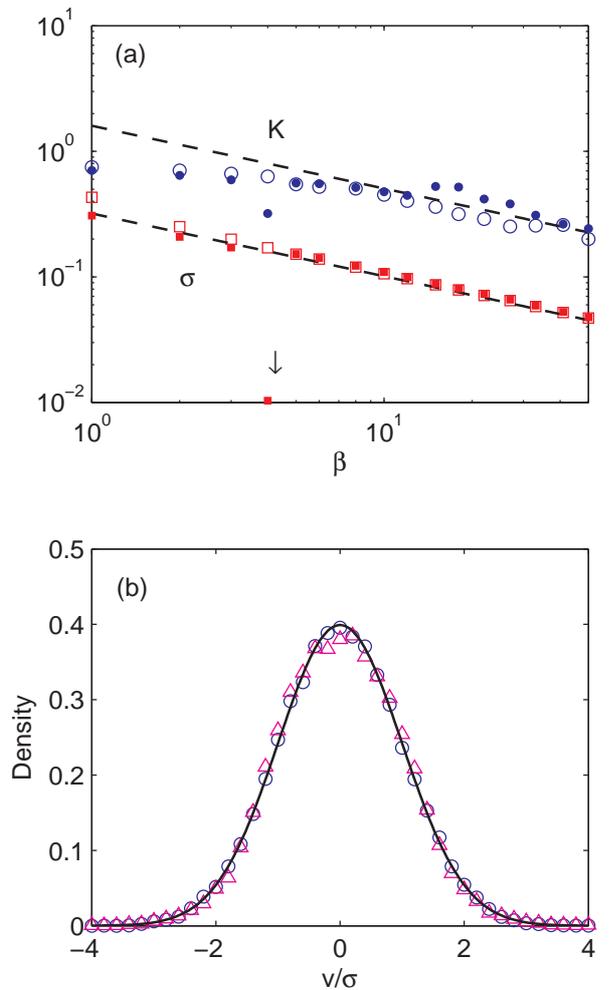}
\caption{(Color online) (a) Upper bound $K$ and standard deviation $\sigma$ for solutions of \eqref{eq:non} with the step function \eqref{eq:step} (hollow) and quasi-periodic function (solid) nonlinearities respectively. The dashed lines show the dependence $K \sim
\beta^{-1/2}$ and $\sigma \sim \beta^{-1/2}$. In the case of the quasi-periodic function, the result for $\beta=4$ (marked by an arrow) is exceptional because the solutions are not chaotic (in fact, they are periodic solutions). (b) Normalized distributions obtained from the numerical solutions of \eqref{eq:non} with a step function (blue circles) and quasi-periodic function (magenta triangles) nonlinearity respectively. The solid curve is the density function of the quasi-Gaussian distribution. Here $\beta = 10$.}
\label{fig:fH}
\end{figure}

\section{Discussion and conclusions}
\label{sec:5}

In this paper, we have studied a simple differential delay equation that displays a variety of behaviors, including chaotic solutions.

In Section \ref{sec:2} we  carried out a complete bifurcation analysis for the
steady state solutions. When $\gamma = 1$, our analysis show that for any positive integer $k$, there is an interval $I_k = (a_k, b_k)$ that contains $k+1/4$,
such that when $\beta \in I_k$, there are two stable steady states (which are symmetric with respect to $0$). Furthermore, $(\beta, \vel) = (a_k, \pm y_k)$ are
saddle node bifurcation points, and $(\beta, \vel) = (b_k, \pm z_k)$ are Hopf bifurcation points. Explicit expressions for $a_k, b_k, y_k, z_k$ are given in Section
\ref{sec:2.1}.  When $\beta$ increases past $b_k$, two stable periodic solutions are generated at the Hopf bifurcation points. In additional to
these regular solutions, when $\beta > 0.85$, the equation also has irregular solutions, which show chaotic behaviors.

In Section \ref{sec:3}, we numerically studied the probabilistic properties of the irregular (chaotic) solutions.  Our simulations suggest that when $\beta$ is
large ($\gamma = 1$), the discrete sequences $\{\vel_n\}$ generated by irregular solutions $\vel(t)$  (obtained by sampling each numerical solution every 1000 steps, i.e., $\vel_n =\vel(n\times 1000 \Delta t)$) have the character of Gaussian distributed noise, but are truncated at the bound
$ \pm K$  which varies as $\beta^{-1/2}$. The variance of the time series $\{\vel_n\}$ also depends on $\beta$ as
$\beta^{-1/2}$.  When $\beta$ is sufficiently large,  the density of the distribution of the normalized solution approaches a quasi-Gaussian distribution
\eqref{eq:den} with parameter $K_0 = K/\sigma \simeq 5$.

In Section \ref{sec:3}, the quasi-Gaussian distribution was obtained from the time series $\{\vel_n\}$ of solutions $v(t)$ of Eq. \eqref{eq:1} with constant initial functions, and each solution is sampled every 1000 steps. We also noted that the stationary density function is independent of the initial function. We argue that the main results obtain in Section \ref{sec:3} are independent of the sampling frequency. Thus, Figure \ref{fig:mathp} shows the bound $K$ and standard deviation $\sigma$, for different values of $\beta$, of the time series $\{\vel_n\}$ when we sample the numerical solutions of Eq. \eqref{eq:1} every 1 step ($\vel_n = \vel(n \Delta t)$). The results obviously show $K\sim \beta^{-1/2}$ and $\sigma\sim \beta^{-1/2}$, as we have seen in Section \ref{sec:3}. Figure \ref{fig:mathp}b shows the distribution obtained from the time series obtained by sampling a solution with different frequencies (every 1 step, 500 steps, and 1000 steps, respectively). These simulation results suggested several conjectures (which we have been unable to prove) for the probabilistic properties of the solutions of \eqref{eq:1} as given in Section \ref{conjectures}.

\begin{figure}[htbp]
\centering
\includegraphics[width=8cm]{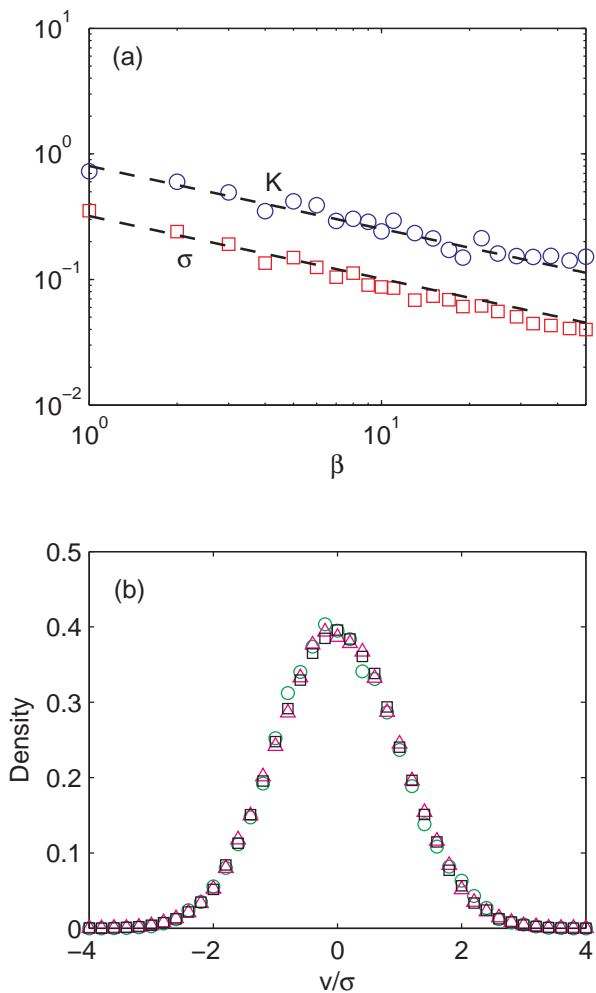}
\caption{(Color online) (a) Upper bound $K$ and standard deviation $\sigma$ obtained from time series $\{\vel_n\}$ obtained by sampling numerical solutions every 1 step, i.e., $\vel_n = \vel(n \Delta t)$. Dashed lines show the dependence $K \sim
\beta^{-1/2}$ and $\sigma \sim \beta^{-1/2}$. Here $\gamma = 1$. (b) Normalized distributions obtained from time series obtained by sampling a solution with different frequencies (every 1 step (green circles), every 500 steps (black squares), and every 1000 steps (magenta triangles)). Here $\beta = 10, \gamma = 1$.}
\label{fig:mathp}
\end{figure}

Section \ref{sec:4} has shown that a Brownian motion like behavior can be reproduced from the quasi-Gaussian distributed solution of the differential delay equation.  This deterministic Brownian motion shows behavior similar to that of experimentally observed Brownian motion, and therefore provides an alternative way to model apparently erratic behavior in nature. For example, the close to 50\% efficiency exhibited in certain biological processes is very difficult to explain from a purely thermodynamic point of view \cite{Bockries1993}. The dynamical alternative presented here could afford another possible explanation, which could originate in a  coherent, dynamical behavior at the molecular level of description. We feel that the application of the concept of a {\it deterministic Brownian motion} in modeling physical or biological phenomena that display stochastic aspects will be of great interest in future studies.

Finally, in Section \ref{conjectures} we have formulated five conjectures derived from our extensive numerical studies of this paper.  We hope that these serve as a challenge to others.

The significance of these results is, we feel, interesting.  All experimental measurements typically exhibit fluctuations around some value, and it is customary (indeed the norm) to interpret these as ``noise" and the implicit assumption is that these fluctuations are due to some random process that has no deterministic origin.  The density of the distribution of these fluctuations is, moreover, typically approximately Gaussian distributed but they are never truly Gaussian distributed (in the sense that the density is supported on the entire real line) but are always quasi-Gaussian in the sense that we have used it here.  The numerical studies that we have presented lend strong circumstantial support to the alternative interpretation that what is typically held to be the signature of a random ({\it i.e.} non-deterministic) process could equally well be the signature of a completely deterministic process \cite{Cencini2000,Cencini2010}.  The same implications were pointed out by Mackey and Tyran-Kami\'{n}ska \cite{Mackey06} based on analytic computations in a similar situation.

\appendix*

\section{Proof of Theorem \ref{th:bif}}

To prove theorem \ref{th:bif}, we first need the following two lemmas, which are obvious from the implicit function theorem, and the proofs are omitted.

\begin{lemma}
\label{le:1}
Let $(\beta_0, \vel_0)$ satisfy \eqref{eq:ss}, and assume that
$$1 - 2\pi \beta_0 \cos (2\pi \beta_0 \vel_0) \not = 0.
$$
Then there is a function $g(\beta)$, such that:
\begin{enumerate}
\item[\textnormal{(1)}] The function $g(\beta)$ satisfies $g(\beta_0) = \vel_0$, and
$$g(\beta) = \sin (2\pi \beta g(\beta))$$
for $\beta$ in a neighborhood of $\beta_0$.
\item[\textnormal{(2)}] The function $g(\beta)$ is differentiable, and
\begin{equation}
\dfrac{d g(\beta)}{d \beta} = \dfrac{2\pi g(\beta)\cos(2\pi \beta g(\beta))}{1 - 2 \pi \beta \cos(2\pi\beta g(\beta)}.
\end{equation}
\item[\textnormal{(3)}] The function $g(\beta)$ satisfies
\begin{equation}
\label{eq:dg}
\dfrac{d(\beta \cos (2\pi \beta g(\beta)))}{d\beta}  = \dfrac{\cos (2\pi \beta g(\beta)) - 2\pi \beta}{1 - 2\pi \beta \cos (2\pi \beta
g(\beta))}
\end{equation}
for $\beta$ in a neighborhood of $\beta_0$.
\end{enumerate}
\end{lemma}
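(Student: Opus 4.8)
The plan is to recognize Lemma \ref{le:1} as a direct application of the implicit function theorem to the steady-state relation \eqref{eq:ss}, followed by two rounds of implicit differentiation. First I would set $\Phi(\beta,\vel) = \vel - \sin(2\pi\beta\vel)$, so that \eqref{eq:ss} reads $\Phi(\beta,\vel)=0$ and $\Phi(\beta_0,\vel_0)=0$ by hypothesis. Since $\partial\Phi/\partial\vel = 1 - 2\pi\beta\cos(2\pi\beta\vel)$, the standing assumption $1 - 2\pi\beta_0\cos(2\pi\beta_0\vel_0)\neq 0$ is precisely the nondegeneracy condition of the implicit function theorem. It therefore produces a unique $C^1$ function $g$ on a neighborhood of $\beta_0$ with $g(\beta_0)=\vel_0$ and $\Phi(\beta,g(\beta))\equiv 0$, i.e. $g(\beta)=\sin(2\pi\beta g(\beta))$, which establishes part (1) together with the differentiability asserted in part (2).

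For the derivative formula in part (2) I would differentiate the identity $g(\beta)=\sin(2\pi\beta g(\beta))$ in $\beta$ by the chain rule, obtaining $g'(\beta) = 2\pi\cos(2\pi\beta g(\beta))\,\bigl(g(\beta)+\beta g'(\beta)\bigr)$. Collecting the $g'(\beta)$ terms gives $g'(\beta)\bigl[1-2\pi\beta\cos(2\pi\beta g(\beta))\bigr] = 2\pi g(\beta)\cos(2\pi\beta g(\beta))$, and dividing by the nonvanishing factor $1-2\pi\beta\cos(2\pi\beta g(\beta))$ yields exactly the claimed expression.

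Part (3) is the only step that requires a small observation beyond routine differentiation. Writing $c=\cos(2\pi\beta g(\beta))$, I would compute $\frac{d}{d\beta}[\beta c] = c + \beta\,\frac{dc}{d\beta}$, where $\frac{dc}{d\beta} = -\sin(2\pi\beta g(\beta))\cdot 2\pi\bigl(g(\beta)+\beta g'(\beta)\bigr)$. Two facts then collapse the expression: on the steady-state curve $\sin(2\pi\beta g(\beta))=g(\beta)$, and from part (2) one checks directly that $g(\beta)+\beta g'(\beta) = g(\beta)/\bigl(1-2\pi\beta c\bigr)$. Substituting gives $\frac{d}{d\beta}[\beta c] = \bigl(c - 2\pi\beta c^2 - 2\pi\beta g(\beta)^2\bigr)/\bigl(1-2\pi\beta c\bigr)$, and the Pythagorean identity $c^2 + g(\beta)^2 = \cos^2(2\pi\beta g) + \sin^2(2\pi\beta g) = 1$ reduces the numerator to $c-2\pi\beta$, which is \eqref{eq:dg}. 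I expect no genuine obstacle here, since the entire content is the implicit function theorem plus bookkeeping — this is why the statement is labelled obvious. The only point worth flagging is that the Pythagorean collapse in part (3) uses essentially that $(\beta,g(\beta))$ lies \emph{on} the steady-state curve, not merely near it, so the substitution $\sin(2\pi\beta g)=g$ is the key that makes the numerator simplify.
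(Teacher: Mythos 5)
Your proposal is correct and matches the paper's approach exactly: the paper omits the proof, stating only that the lemma is ``obvious from the implicit function theorem,'' and your argument supplies precisely that --- the nondegeneracy condition $\partial\Phi/\partial\vel \neq 0$, implicit differentiation for part (2), and the collapse via $\sin(2\pi\beta g) = g$ and the Pythagorean identity for part (3). All computations check out, including the key intermediate identity $g + \beta g' = g/(1 - 2\pi\beta\cos(2\pi\beta g))$.
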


\begin{lemma}
\label{le:2}
Let $(\beta_0, \vel_0)$ satisfy \eqref{eq:ss}, and assume that
$$
2\pi \vel_0 \cos (2\pi \beta_0 \vel_0) \not = 0.
$$
Then there is a function $h(\vel)$, such that:
\begin{enumerate}
\item[\textnormal{(1)}] The function $h(\vel)$ satisfies $h(\vel_0) =\beta_0$, and
$$h(\vel) = \sin (2\pi h(\vel) \vel)$$
for $\vel$ in a neighborhood of $\vel_0$.
\item[\textnormal{(2)}] The function $h(\vel)$ is differentiable, and
\begin{equation}
\dfrac{d h(\vel)}{d \vel} = \dfrac{1 - 2 \pi h(\vel) \cos(2\pi h(\vel) \vel)}{2\pi \vel\cos(2\pi h(\vel) \vel)}.
\end{equation}
\item[\textnormal{(3)}] The function $h(\vel)$ satisfies
\begin{equation}
\label{eq:df}
\dfrac{d( h(\vel) \cos (2\pi h(\vel) \vel))}{d \vel}  = \dfrac{\cos (2\pi h(\vel) \vel) - 2\pi h(\vel)}{ 2 \pi \vel \cos(2\pi h(\vel) \vel)}
\end{equation}
for $\vel$ in a neighborhood of $\vel_0$.
\end{enumerate}
\end{lemma}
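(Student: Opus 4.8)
The plan is to derive all three assertions as a direct application of the implicit function theorem to the steady-state relation \eqref{eq:ss}, now viewing $\beta$ as the dependent variable and $\vel$ as the independent one. First I would set
\[
G(\vel,\beta) = \sin(2\pi\beta\vel) - \vel ,
\]
so that the hypothesis that $(\beta_0,\vel_0)$ satisfies \eqref{eq:ss} reads $G(\vel_0,\beta_0)=0$. Since
\[
\frac{\partial G}{\partial \beta}(\vel_0,\beta_0) = 2\pi\vel_0\cos(2\pi\beta_0\vel_0) \neq 0
\]
by assumption, the implicit function theorem produces a $C^1$ function $h(\vel)$ on a neighborhood of $\vel_0$ with $h(\vel_0)=\beta_0$ and $G(\vel,h(\vel))\equiv 0$, which is exactly assertion (1).

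For assertion (2) I would differentiate the identity $\vel = \sin(2\pi h(\vel)\vel)$ with respect to $\vel$, obtaining $1 = 2\pi\cos(2\pi h\vel)\,(h'\vel + h)$, and then solve this linear relation for $h'$; collecting the terms over the common denominator $2\pi\vel\cos(2\pi h\vel)$ reproduces the claimed formula. The nonvanishing of $\cos(2\pi h\vel)$ near $\vel_0$ --- again a consequence of the standing hypothesis $2\pi\vel_0\cos(2\pi\beta_0\vel_0)\neq 0$ together with continuity --- is what makes the division legitimate on a possibly smaller neighborhood.

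For assertion (3) the plan is to expand $\frac{d}{d\vel}\big[h\cos(2\pi h\vel)\big]$ by the product and chain rules, insert the expression for $h'$ from (2), and clear denominators. Writing $\theta = 2\pi h\vel$, the relation derived in (2) gives the convenient identity $d\theta/d\vel = \sec\theta$, whence $\frac{d}{d\vel}[h\cos\theta] = h'\cos\theta - h\tan\theta$, and substitution leaves the numerator $\cos\theta - 2\pi h(\cos^2\theta + \vel\sin\theta)$ over $2\pi\vel\cos\theta$. The only step that is not purely formal bookkeeping is the final simplification: to recover the stated right-hand side one must reuse the steady-state constraint itself, $\vel=\sin\theta$, which collapses $\cos^2\theta+\vel\sin\theta$ to $\cos^2\theta+\sin^2\theta=1$ and hence the numerator to $\cos\theta-2\pi h$. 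I expect this reuse of the constraint to be the only subtle point; everything else is a routine consequence of the implicit function theorem, consistent with the remark that the lemma is ``obvious'' from it.
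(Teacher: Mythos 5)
Your proposal is correct and follows exactly the route the paper intends: the paper omits the proof of Lemma \ref{le:2} as ``obvious from the implicit function theorem,'' and your argument simply supplies the details (IFT applied to $G(\vel,\beta)=\sin(2\pi\beta\vel)-\vel$ with $\beta$ as the dependent variable, implicit differentiation for (2), and reuse of the constraint $\vel=\sin\theta$ to collapse the numerator in (3)). All computations check out, including the key simplification $\cos^2\theta+\vel\sin\theta=1$.
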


\textit{Proof of Theorem \ref{th:bif}.}\  Points (1), (2) and (3) are  obvious.

Point (4). Here, we only consider the positive solutions $\vel$, as the negative solutions are symmetric.  The saddle node bifurcation points are given by solutions of the equations
\begin{equation}
\label{eq:akx}
\vel = \sin (2\pi \beta \vel),\quad 2\pi \beta\cos(2\pi \beta \vel) = 1.
\end{equation}
When $\vel > 0$ is a solution of \eqref{eq:akx}, then $\sin(2\pi \beta \vel) > 0$ and $\cos(2\pi\beta \vel) > 0$.  Therefore solutions of \eqref{eq:akx} will always satisfy
$$2k \pi < 2\pi \beta \vel < 2 k \pi + \dfrac{\pi}{2},\quad k \in \mathbb{N}_0.$$
Hence, equations \eqref{eq:akx} are equivalent to
\begin{equation}
\vel^2 + \left (\dfrac{1}{2\pi \beta}\right )^2 = 1,\quad 2\pi \beta \vel = 2 k \pi + \arccos \dfrac{1}{2\pi \beta},\quad k\in \mathbb{N}_0.
\end{equation}
Consequently, the bifurcation points are given by the solutions of
\begin{equation}
\label{eq:ak}
\left\{
\begin{array}{rcl}
2\pi k &=& 2\pi \beta \sqrt{1 - \left (\dfrac{1}{2\pi \beta}\right )^2} - \arccos \dfrac{1}{2\pi \beta},\\
\vel& =& \dfrac{k}{\beta} + \dfrac{1}{2\pi \beta} \arccos \dfrac{1}{2\pi \beta}
\end{array} \right. \quad k\in\mathbb{N}_0.
\end{equation}
From Lemma \ref{le:0}, these equations have a unique solution, which gives the saddle node bifurcation points $(\beta, \vel) = (a_k, y_k)$.

Point (5). The Hopf bifurcation points are given by the solutions of
\begin{equation}
\label{eq:x}
\vel = \sin (2\pi \beta \vel), \quad 2 \pi \beta \cos (2 \pi \beta \vel) = \sec\omega.
\end{equation}
Similar to our previous argument in Point (4), we only consider positive solutions $\vel$ that are given by the solutions of
\begin{equation}
\label{eq:bk}
\left\{
\begin{array}{rcl}
2\pi k &=& 2\pi \beta \sqrt{1 - \left (\dfrac{\sec \omega}{2\pi \beta}\right )^2} - \arccos \dfrac{\sec \omega}{2\pi \beta},\\
\vel & =& \dfrac{k}{\beta} + \dfrac{1}{2\pi \beta} \arccos \dfrac{\sec\omega}{2\pi \beta}
\end{array} \right. \quad k\in\mathbb{N}_0,
\end{equation}
which give the Hopf bifurcation points $(b_k, z_k)$.

Point (6). Let
$$F(\beta, \vel) = \vel - \sin(2\pi \beta \vel).$$
For any $k\in \mathbb{N}^*$, our previous arguments indicate that $(a_k, y_k)$ satisfies
$$F(a_k, y_k) = 0,$$
and further
$$\dfrac{\partial F(a_k, y_k)}{\partial \beta} = - 2\pi y_k \cos(2\pi a_k y_k) \not= 0.$$
Thus, from Lemma \ref{le:2}, there is a function $\beta = h_k(\vel)$, such that $a_k = h_k(y_k)$, and it is differentiable in a neighborhood of $y_k$.

We will show that the function $h_k(\vel)$ can be continued to the interval $\vel\in (0,y_k]$.  If not, there is $\vel^* \in (0, y_k]$ and $\beta^*$ such
that
$$F(\beta^*, \vel^*) = 0,$$
and
$$\dfrac{\partial F(\beta^*, \vel^*)}{\partial \beta} = - 2\pi \vel^* \cos(2\pi \beta^* \vel^*) = 0,$$
which implies $\cos (2\pi \beta^* \vel^*) = 0$. Therefore we should have
$$(\vel^*)^2 = (\sin (2\pi \beta^* \vel^*))^2 = 1.$$
However, this is impossible since $y_k < 1$. Thus, we conclude that the function $h_k(\vel)$ can be continued to the entire interval $(0, y_k]$, and further that
$\cos(2\pi h_k(\vel) \vel) > 0$ for any $\vel\in (0, y_k]$.

Next, we will show that $h_k(\vel)$  is a decreasing function for $\vel\in (0, y_k]$. From \eqref{eq:df}, we have
$$
\dfrac{d (h_k(\vel) \cos(2\pi h_k(\vel) \vel))}{d v} = \dfrac{\cos (2\pi h_k(\vel) \vel) - 2 \pi h_k(\vel)}{2\pi \vel \cos (2\pi h_k(\vel) \vel)} < 0.
$$
Thus, for $\vel\in (0, y_k)$, we have
$$
2 \pi h_k(\vel) \cos(2 \pi h_k(\vel) \vel) > 2 \pi a_k \cos (2\pi a_k y_k) = 1.
$$
Note that $h_k(\vel) > a_0 = 1/(2\pi)$ and $\cos (2\pi h_k(\vel) \vel) > 0$. Therefore,  from Lemma \ref{le:2}
$$
\dfrac{d h_k(\vel)}{d \vel} = \dfrac{1  - 2 \pi h_k(\vel) \cos (2\pi h_k(\vel) \vel)}{2 \pi \vel \cos(2\pi h_k(\vel) \vel)} < 0.
$$

Now, the function $\beta = h_k(\vel)$ is well defined and decreasing for $\vel\in (0, y_k]$. Thus, the inverse function, denoted by $\vel = f_k(\beta)$, is also
well defined, continuous at $\beta \in [a_k, \infty)$, and such that $\vel=\pm f_k(\beta)$ satisfy \eqref{eq:ss} and \eqref{eq:u1}.  From \eqref{eq:u1},
it is easy to conclude that the steady state solutions $\vel(t)\equiv \pm f_k(\beta)$ are unstable, and Point (6) is proved.

Point (7). For any $k\in \mathbb{N}_0$, the Hopf bifurcation point $(b_k, z_k)$ satisfies
$$
1 - 2\pi b_k \cos (2\pi b_k z_k) = 1 - \sec \omega > 0.
$$
Therefore, we can apply Lemma \ref{le:1},  and there is a function $g_k(\beta)$ such that $\vel = g_k(\beta)$ satisfies \eqref{eq:ss}, and $z_k =
g_k(b_k)$.

When $\beta > b_k\ (> a_0)$, from \eqref{eq:dg} in Lemma \ref{le:1}, we have
$$
\begin{array}{l}
\dfrac{d (1 - 2\pi \beta \cos (2\pi \beta g_k(\beta)))^2}{d \beta}\\
 \quad = - 4  \pi (\cos (2\pi \beta g_k(\beta)) - 2 \pi \beta) > 0.
 \end{array}
$$
Thus,  we have $\partial F(\beta,v)/\partial v\not=0$ for $\beta > b_k$ and the function $g_k(\beta)$ can be continued to $\beta\in (b_k, \infty)$, and the steady state solutions $\vel(t) \equiv \pm g_k(\beta)$ are
unstable.

When $\beta < b_k$, we will show that the function $g_k(\beta)$ can be continued to $\beta \in (a_k, b_k)$. If not, there is $\beta^*\in (a_k, b_k)$
such that
$$
F(\beta^*, \vel^*) = 0
$$
and
$$
\dfrac{\partial F(\beta^*, \vel^*)}{\partial \vel} = 1 - 2\pi \beta^* \cos (2\pi \beta^* \vel^*) = 0.
$$
Therefore, we must have $\beta^* = a_{k'}$ for some $k'\in \mathbb{N}_0$. However, $a_k$ is the maximum of such values that are less than $b_k$, and thus we must have $\beta^* = a_k$.  Therefore, the function $g_k(\beta)$ can be continued to $\beta \in (a_k, b_k)$, and $g_k(a_k) = y_k$. These arguments show that the function $g_k(\beta)$ is well defined in the interval $(a_k, b_k)$, and satisfies $y_k = g_k(a_k), z_k = g_k(b_k)$.

Now, we only need to show that when $a_k < \beta < b_k$, the steady state solutions $\vel(t)\equiv \pm g_k(\beta)$  are stable. Since $(1 - 2\pi \beta
\cos (2\pi \beta g_k(\beta)))^2$ is increasing with respect to $\beta$, and
$$
\left\{
\begin{array}{l}
1 - 2 \pi a_k \cos (2\pi a_k g_k(a_k)) = 0, \\
1 - 2 \pi b_k \cos (2\pi b_k g_k(b_k)) = 1 - \sec\omega > 0,
\end{array}\right.
$$
we have
$$
 0 < 1 - 2\pi \beta \cos(2\pi \beta g_k(\beta)) < 1 - \sec \omega,
$$
{\it i.e.},
$$
\sec\omega < 2\pi \beta \cos (2\pi \beta g_k(\beta)) < 1
$$
for any $\beta \in (a_k, b_k)$.  Therefore the steady state solutions
$\vel(t)\equiv g_k(\beta)$ are stable, and Point (7) is proved.

Point (8) is obvious from the above arguments, and the theorem is proved.

\section*{Acknowledgements}  We are grateful for research support from NSERC (Canada) (JL and MCM), MITACS (Canada), and the Alexander von Humboldt Stiftung (Germany) (MCM).  We are especially indebted to Dr. Catherine Foley (Montreal) for initial numerical experiments on the system
\eqref{eq:non}-\eqref{eq:step}, and to Dr. hab. Marta Tyran-Kami\'nska (Katowice) for extensive earlier discussions about this problem.  This research was carried out in Montreal, Bremen (Germany) and Beijing and MCM would like to thank the Zhou Pei-Yuan Center for Applied Mathematics of Tsinghua University for their hospitality.

\bibliographystyle{aipnum4-1}

%

\end{document}